\newtheorem{theorem}{Theorem}
\newtheorem{lemma}[theorem]{Lemma}
\newtheorem{algorithm}[theorem]{Algorithm}
\numberwithin{theorem}{section}
\numberwithin{equation}{section}
\numberwithin{figure}{section}
\title{Numerical methods for a Kohn-Sham density functional model based on optimal transport}
\author{Huajie Chen
\thanks{Zentrum Mathematik, Technische Universit\"{a}t M\"{u}nchen,
Boltzmannstra{\ss}e 3, 85747 Garching, Germany.
E-mail: {\tt chenh@ma.tum.de}.},
Gero Friesecke
\thanks{Zentrum Mathematik, Technische Universit\"{a}t M\"{u}nchen,
Boltzmannstra{\ss}e 3, 85747 Garching, Germany.
E-mail: {\tt gf@ma.tum.de}.},
and Christian B. Mendl
\thanks{Zentrum Mathematik, Technische Universit\"{a}t M\"{u}nchen,
Boltzmannstra{\ss}e 3, 85747 Garching, Germany.
E-mail: {\tt mendl@ma.tum.de}.}
}
\date{\today}
\begin{document}
\maketitle

\begin{abstract}
In this paper, we study numerical discretizations to solve density functional models in the ``strictly correlated electrons'' (SCE) framework. Unlike previous studies our work is not restricted to radially symmetric densities. In the SCE framework, the exchange-correlation functional encodes the effects of the strong correlation regime by minimizing the pairwise Coulomb repulsion, resulting in an optimal transport problem. We give a mathematical derivation of the self-consistent Kohn-Sham-SCE equations, construct an efficient numerical discretization for this type of problem for $N = 2$ electrons, and apply it to the H$_2$ molecule in its dissociating limit. Moreover, we prove that the SCE density functional model is correct for the H$_2$ molecule in its dissociating limit.
\end{abstract}

\section{Introduction}
\label{section-intro}

In the ab-initio quantum mechanical modeling of many-particle systems, Kohn-Sham density functional theory (DFT)
\cite{hohenberg64,kohn65} achieves so far the best compromise between accuracy and computational cost, and has
become the most widely used electronic structure model in molecular simulations and material science.
Within the traditional Kohn-Sham formulation, the ground state energy and electron density of an $N$-electron
system can be obtained by minimizing the energy functional
\begin{equation}\label{ks-energy}
E_{\rm KS}\left[\{\phi_i\}_{i=1}^{N}\right] =
\int_{\mathbb{R}^3} \left( \frac{1}{2}\sum_{i=1}^{N}|\nabla\phi_i({\bf r})|^2
+ v_{\rm ext}({\bf r})\rho({\bf r}) ] \right) d{\bf r} + E_{\rm H}[\rho] + E_{\rm xc}[\rho]
\end{equation}
with respect to orbitals $\{\phi_i\}_{i=1}^{N}$ under the constraint $\int_{\mathbb{R}^3}\phi_i\phi_j = \delta_{ij}$.
Here, $\rho({\bf r})=\sum_{i=1}^{N}|\phi_i({\bf r})|^2$ is the electron density,
$v_{\rm ext}$ is the electrostatic attraction potential generated by the nuclei,
$E_{\rm H}[\rho]=\frac{1}{2}\int_{\mathbb{R}^3}\int_{\mathbb{R}^3}\frac{\rho({\bf r})\rho({\bf r'})}
{|{\bf r}-{\bf r'}|}d{\bf r}d{\bf r'}$ is the Hartree energy that describes electron-electron Coulomb repulsion
energy by a mean field approximation, and $E_{\rm xc}[\rho]$ is the so-called exchange-correlation energy
functional that includes all the many-particle interactions.

The major drawback of DFT is the fact that the exact functional for the exchange-correlation energy is not known.
A basic model is local density approximation (LDA) \cite{kohn65,perdew81},
which is still commonly used in practical calculations.
Improvements of this model give rise to the generalized gradient approximation (GGA)
\cite{langreth80,perdew86,perdew96} and hybrid functionals \cite{becke93,lee88,stephens94}.
Although these models have achieved high accuracy for many chemical and physical systems,
there remain well-known limitations.
For example, in systems with significant static correlation \cite{helgaker00},
LDA, GGA, and also hybrid functionals underestimate the magnitude of the correlation energy.
This becomes particularly problematic for the dissociation of electron pair bonds.
A famous example is the dissociating H$_2$ molecule: the widely employed LDA, GGA,
and even hybrid models fail rather badly at describing the energy curve for dissociating H$_2$.
Many efforts have been made in order to make an appropriate ansatz for the exchange-correlation
functional and tackle this problem (e.g., \cite{fuchs05,gruning03}).
In our view, a principal deficiency of these works is the attempt to describe strong correlation within the framework of mean field approximations.

Alternatively, DFT calculations can also be based on the strongly interacting limit of the Hohenberg-Kohn density functional, denoted ``strictly correlated electrons'' (SCE) DFT \cite{gori09, seidl07}.
This approach considers a reference system with complete correlation between the electrons, and is able to capture key features of strong correlation within the Kohn-Sham framework.
The pioneering work \cite{gori10,malet12,seidl07} has shown that the SCE ansatz can describe certain model systems in the extreme strongly correlated regime with higher accuracy than standard Kohn-Sham DFT.
However, the calculations are presently limited to either one-dimensional or spherically symmetric systems.
To our knowledge, there is no SCE-DFT calculation for dissociating the H$_2$ molecule in $\mathbb{R}^3$.

In the SCE-DFT model, the repulsion energy between strongly interacting electrons is related to
optimal transport theory. Optimal transport was historically studied in \cite{monge81} to model the most economical way of moving soil from one area to another, and was further generalized in \cite{kantorovich1940, kanto42} to the Kantorovich primal and dual formulation. The goal is to transfer masses from
an initial density $\rho_A$ to a target density $\rho_B$ in an optimal way such that the ``cost'' $c(x,y)$ for
transporting mass from $x$ to $y$ is minimized (see \cite{villani09} for a comprehensive treatment).
The Coulomb repulsion energy in the SCE-DFT model can be reformulated as the optimal cost of an optimal transport problem, if we identify the marginals with the electron density divided by number of electrons, i.e., $\rho/N$, and the cost function with the electron-electron Coulomb repulsion
\begin{equation}\label{cost_Coulomb}
c_{\rm ee}({\bf r}_1,\dots,{\bf r}_N) = \sum_{1\leq i<j\leq N}\frac{1}{|{\bf r}_i-{\bf r}_j|}.
\end{equation}
For instance, for a two-electron system within the SCE-DFT framework,
the electron repulsion energy for a given single-particle electron density $\rho$ is
\begin{multline}\label{functional-sce}
V_{\rm ee}^{\rm SCE}[\rho] = \min_{\Psi} \bigg\{
\int_{\mathbb{R}^6}\frac{|\Psi({\bf r}_1,{\bf r}_2)|^2}{|{\bf r}_1-{\bf r}_2|} d{\bf r}_1 d{\bf r}_2, \\
\int_{\mathbb{R}^3}|\Psi({\bf r}_1,{\bf r})|^2d{\bf r}_1 =
\int_{\mathbb{R}^3}|\Psi({\bf r},{\bf r}_2)|^2d{\bf r}_2 = \frac{\rho({\bf r})}{2} \bigg\}. \quad
\end{multline}
Strictly speaking, the set of admissible $|\Psi|^2$'s must be enlarged to probability measures in order to allow strict correlation, which corresponds to concentration of the many-body probability density on a lower dimensional subset \cite{cotar13a}. There are several mathematical investigations of the relations between SCE-DFT and optimal transport problems, see \cite{buttazzo12,cotar13a,cotar13b,friesecke13}, but important open problems remain. To our knowledge, the functional derivative of the SCE functional \eqref{functional-sce} (alias optimal cost functional) with respect to the electron density
(alias marginal measure) is not clear from a mathematical point of view.
However, this result is crucial for deriving the Kohn-Sham equations needed in practical calculations.
Numerical algorithms for optimal transport problems are rather sparse.
Explicit solutions for the co-motion functions are known for one-dimensional and spherically symmetric problems \cite{seidl07}, but cannot be generalized to two- and three-dimensional systems. An alternative route might be the Kantorovich dual formulation of the SCE functional \cite{buttazzo12, mendl13}. In a complementary work \cite{vuckovic2014}, the H$_2$ molecule is studied using an ansatz for the dual potential, and there is a recent simulation of a one-dimensional model H$_2$ molecule using the SCE framework \cite{malet14}.

In this paper, we give a mathematical derivation of the Kohn-Sham equations for optimal transport-based DFT which is rigorous up to physically expected smoothness and continuity assumptions (section~\ref{section-ks}), provide an efficient numerical algorithm for discretising and solving the resulting optimal transport problem for the case of two electrons without restriction to radial symmetry (section~\ref{section-numerical}), and then apply this algorithm to a self-consistent DFT simulation of the H$_2$ molecule in the dissociating limit (section~\ref{section-h2}). Finally, we show both numerically and by a rigorous mathematical argument that the SCE-DFT model is accurate for the H$_2$ molecule in the dissociating limit.

\section{Preliminaries}
\label{section-pre}

Consider a molecular system with $M$ nuclei of charges $\{Z_1,\ldots,Z_M\}$,
located at positions $\{{\bf R}_1,\ldots,{\bf R}_M\}$,
and $N$ electrons in the non-relativistic setting.
The electrostatic potential generated by the nuclei is
\begin{equation*}
v_{\rm ext}({\bf r}) = -\sum_{I=1}^{M}\frac{Z_I}{|{\bf r}-{\bf R}_I|},
\quad {\bf r}\in\mathbb{R}^3.
\end{equation*}
Within the DFT framework \cite{hohenberg64,lieb83}, the ground state density and energy of
the system is obtained by solving the following minimization problem
\begin{equation}\label{min-dft}
E_0 = \min_{\rho} \bigg\{ F_{\rm HK}[\rho]+\int_{\mathbb{R}^3}v_{\rm ext}\rho,
~ \rho\geq 0,~\sqrt{\rho}\in H^1(\mathbb{R}^3),~\int_{\mathbb{R}^3}\rho=N \bigg\},
\end{equation}
where $\rho$ is the electron density and $F_{\rm HK}[\rho]$ is the so-called Hohenberg-Kohn functional \cite{hohenberg64}.
$F_{\rm HK}$ is a universal functional of $\rho$ in the sense that it does not depend on the external potential $v_{\rm ext}$.
Unfortunately, no tractable expression for $F_{\rm HK}$ is known that could be used in numerical simulations.
The standard Kohn-Sham DFT \cite{kohn65} treats the system as $N$ non-interacting electrons,
and approximates $F_{\rm HK}[\rho]$ by a summation of the kinetic energy
\begin{multline}\label{energy-ks}
T_{\rm KS}[\rho] = \inf\bigg\{ \frac{1}{2}\sum_{i=1}^N \int_{\mathbb{R}^3}|\nabla\phi_i({\bf r})|^2 d{\bf r},
~ \phi_i\in H^1(\mathbb{R}^3), \\
~ \sum_{i=1}^N|\phi_i({\bf r})|^2 = \rho({\bf r}),~\int_{\mathbb{R}^3}\phi_i\phi_j = \delta_{ij} \bigg\}, \quad
\end{multline}
the Hartree energy $E_{\rm H}[\rho]$, and an exchange-correction energy $E_{\rm xc}[\rho]$, as shown in Eq.~\eqref{ks-energy}.

Since the standard non-interacting model cannot capture the features that result from strong correlation, it is not able to simulate strongly correlated electron systems, like the H$_2$ molecule in its dissociating limit. In contrast to that, the SCE-DFT model \cite{seidl1999, SPL1999, seidl07} starts from the strongly interacting limit (semi-classical limit) of $F_{\rm HK}$,
and gives rise to the following SCE functional (see \cite{friesecke13} for a mathematical justification)
\begin{equation}\label{energy-sce}
V_{\rm ee}^{\rm SCE}[\rho] = \inf \big\{ V_{\rm ee}[\rho_N],
~\rho_N({\bf r}_1,\dots,{\bf r}_N)\geq 0,~\rho_N~{\rm is~symmetric},~\rho_N\mapsto\rho \big\},
\end{equation}
where
\begin{equation}\label{Vee-rhoN}
V_{\rm ee}[\rho_N] = \int_{\mathbb{R}^{3N}}\sum_{1\leq i<j\leq N}
\frac{\rho_N({\bf r}_1,\dots,{\bf r}_N)}{|{\bf r}_i-{\bf r}_j|} d{\bf r}_1\cdots d{\bf r}_N,
\end{equation}
and $\rho_N\mapsto\rho$ means that $\rho$ is the marginal distribution of $\rho_N$, that is to say
\begin{equation*}
\rho({\bf r}) = N\int_{\mathbb{R}^{3(N-1)}}\rho_N({\bf r},{\bf r}_2,\dots,{\bf r}_N) d{\bf r}_2\cdots d{\bf r}_N.
\end{equation*}
The minimization in Eq.~\eqref{energy-sce} is over all symmetric $N$-point probability measures $\rho_N$ which have the given single-particle density $\rho$ as marginal, and yields the minimum of the electronic Coulomb repulsion energy over all such $\rho_N$. The SCE-DFT model takes $V_{\rm ee}^{\rm SCE}[\rho]$ as the only interaction term, replacing $E_{\rm H}[\rho]+E_{\rm xc}[\rho]$ in standard Kohn-Sham DFT.

The minimization task \eqref{energy-sce} is in fact an optimal transport problem with Coulomb cost \cite{cotar13a,buttazzo12,friesecke13}, which has two alternative formulations:
the Monge formulation and the Kantorovich dual formulation.
For the Monge formulation, one uses the ansatz
\begin{equation}\label{eq-rhoN}
\rho_N({\bf r}_1,\dots,{\bf r}_N) = \frac{\rho({\bf r}_1)}{N}
\delta({\bf r}_2-T_2({\bf r}_1))\cdots\delta({\bf r}_N-T_N({\bf r}))
\end{equation}
with $T_i:\mathbb{R}^3\rightarrow\mathbb{R}^3~(i=2,\dots,N)$ the so-called co-motion functions
(also called optimal transport maps), where we use the convention $T_1({\bf r})={\bf r}$.
The above ansatz already appears on physical grounds, without reference to optimal transport theory, in \cite{seidl07}.
Since the $N$-particle distribution $\rho_N$ in \eqref{eq-rhoN} is zero everywhere except on the set
\begin{equation}\label{set-comotion}
M = \{({\bf r},T_2({\bf r}),\dots,T_N({\bf r})),~{\bf r}\in\mathbb{R}^3\},
\end{equation}
it describes a state where the location of one electron fixes all the other $N-1$ electrons through the co-motion functions $T_i$, $i=2,\dots,N$.
The co-motion functions are implicit functionals of the density, determined by the minimization problem \eqref{energy-sce} and a set of differential equations that ensure the invariance of the density under the coordinate transformation ${\bf r}\mapsto T_i({\bf r})$ \cite{seidl07}, i.e.,
\begin{equation}\label{eq-constraint-comotion}
\rho(T_i({\bf r}))dT_i({\bf r}) = \rho({\bf r})d{\bf r}.
\end{equation}
In terms of these functions, the optimal value of \eqref{energy-sce} reads
\begin{equation}\label{problem-comotion}
V_{\rm ee}^{\rm SCE}[\rho] = \frac{1}{N}\int_{\mathbb{R}^3}
\sum_{1\leq i<j\leq N}\frac{\rho({\bf r})}{|T_i({\bf r})-T_j({\bf r})|}d{\bf r}.
\end{equation}
Note that the ansatz \eqref{eq-rhoN} is not in general symmetric under exchanging particle coordinates, nevertheless,
dropping the symmetrization does not alter the minimum value of \eqref{energy-sce}.

Alternatively, one can start from the so-called Kantorovich dual formulation \cite{kanto42}.
It has been shown in \cite{buttazzo12} that the value of $V^{\rm SCE}_{\rm ee}[\rho]$
is exactly given by the maximum of this Kantorovich dual problem
\begin{equation}\label{kantorovich}
V_{\rm ee}^{\rm SCE}[\rho] = \max\bigg\{\int_{\mathbb{R}^3}u({\bf r})\rho({\bf r})d{\bf r},~
\sum_{i=1}^{N}u({\bf r}_i)\leq\sum_{1\leq i<j\leq N}\frac{1}{|{\bf r}_i-{\bf r}_j|} \bigg\}.
\end{equation}
In what follows, we denote the maximizer of \eqref{kantorovich} by $u_{\rho}$,
which is called the Kantorovich potential.
We assume that $u_{\rho}$ is unique and depends continuously on $\rho$ in the sense that
\begin{equation}\label{u_rho_continuous}
u_{\rho_j}\stackrel{*}{\rightharpoonup}u_{\rho} \quad{\rm if}\quad \rho_j\rightarrow\rho
~{\rm in}~L^1(\mathbb{R}^3).
\end{equation}

For numerical implementations, the Kantorovich dual formulation has high complexity due to the
$3N$-dimensionality of the constraints. In comparison, the Monge formulation amounts to a spectacular dimension reduction, in which the unknowns are $N-1$ maps on $\mathbb{R}^3$ instead of one function $\rho_N$ on $\mathbb{R}^{3N}$. However, for practical purposes it is currently restricted to spherically symmetric densities and one-dimensional systems, for which the constraints \eqref{eq-constraint-comotion} can be solved semi-analytically.
Our purpose is to construct an efficient numerical discretization of the Monge formulation for $N = 2$ electrons which is applicable to non-spherical systems.

\section{Kohn-Sham equations for optimal transport based DFT}
\label{section-ks}

By taking $V_{\rm ee}^{\rm SCE}$ as the only interaction term within the Kohn-Sham DFT framework,
we can obtain the ground state approximations of energy and electron density
by solving the following minimization problem
\begin{equation}\label{min-sce}
E_0 = \inf_{\Phi} \left\{ E_{\rm KS}^{\rm SCE}[\Phi],
~\phi_i\in H^1(\mathbb{R}^3),~\int_{\mathbb{R}^3}\phi_i\phi_j=\delta_{ij}\right\},
\end{equation}
where $\Phi=(\phi_1,\dots,\phi_N)$ denotes the Kohn-Sham orbitals and
\begin{equation}\label{eq-energy-sce}
E_{\rm KS}^{\rm SCE}[\Phi]=\frac{1}{2}\sum_{i=1}^{N} \int_{\mathbb{R}^3}|\nabla\phi_i({\bf r})|^2d{\bf r}
+ \int_{\mathbb{R}^3}v_{\rm ext}({\bf r})\rho_{\Phi}({\bf r})d{\bf r} + V_{\rm ee}^{\rm SCE}[\rho_{\Phi}]
\end{equation}
with $\rho_{\Phi}({\bf r})=\sum_{i=1}^N|\phi_i({\bf r})|^2$.
We shall derive the self-consistent Kohn-Sham equations for \eqref{min-sce} in this section.
The key point is to calculate the functional derivative of the SCE functional
$\delta V_{\rm ee}^{\rm SCE}[\rho] / \delta\rho$ with respect to the single particle density $\rho$,
which is the effective one-body potential coming from the interaction term $V_{\rm ee}^{\rm SCE}[\rho]$. In the derivation below, we make various plausible assumptions on the Kantorovich potential such as uniqueness, continuous dependence on the density, and differentiability at relevant points. We believe these assumptions to be correct except possibly in exceptional situations. A fully rigorous treatment without these assumptions would be desirable, but lies beyond the scope of this paper.

Note that the functional $V_{\rm ee}^{\rm SCE}[\rho]$ is not defined on arbitrary densities,
but only on those with $\int_{\mathbb{R}^3}\rho=N$. Therefore, the definition of the functional
derivative only specifies its integral against perturbations in the corresponding ``tangent space'', that is to say perturbations that have integral zero:
\begin{equation}\label{functional-derivative}
\int_{\mathbb{R}^3}\frac{\delta V_{\rm ee}^{\rm SCE}[\rho]}{\delta\rho}\cdot\tilde{\rho}
= \lim_{\varepsilon\rightarrow0}\frac{V_{\rm ee}^{\rm SCE}[\rho+\varepsilon\tilde{\rho}]
-V_{\rm ee}^{\rm SCE}[\rho]}{\varepsilon}
\quad{\rm for~all}~\tilde{\rho}~{\rm with}~\int_{\mathbb{R}^3}\tilde{\rho}=0.
\end{equation}
The following theorem indicates that the functional derivative is nothing but the Kantorovich potential
$u_{\rho}$ with an additive constant.

\begin{theorem}\label{theo-derivative}
Assume that the maximizer $u_{\rho}$ of \eqref{kantorovich} is unique and depends continuously on
the electron density $\rho$ in the sense of \eqref{u_rho_continuous}.
Then  $v_{\rm SCE}[\rho]=\delta V_{\rm ee}^{\rm SCE}[\rho]/\delta\rho$ is a functional derivative of
$V_{\rm ee}^{\rm SCE}$ at point $\rho$ in the sense of \eqref{functional-derivative} if and only if
\begin{equation}\label{sce-kantorovich}
v_{\rm SCE}[\rho]=u_{\rho}+C \quad\text{for any constant } C.
\end{equation}
\end{theorem}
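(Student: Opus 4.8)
The plan is to exploit the fact that, by the Kantorovich formulation \eqref{kantorovich}, $V_{\rm ee}^{\rm SCE}[\rho]$ is the supremum over the $\rho$-independent constraint set
\[
K=\Big\{u:\ \textstyle\sum_{i=1}^{N}u({\bf r}_i)\le\sum_{1\le i<j\le N}\frac{1}{|{\bf r}_i-{\bf r}_j|}\Big\}
\]
of the maps $u\mapsto\int_{\mathbb{R}^3}u\,\rho$, each of which is \emph{linear} in $\rho$; hence $V_{\rm ee}^{\rm SCE}$ is convex in $\rho$ and we are in the classical envelope (Danskin-type) situation, in which the assumed uniqueness and continuous dependence of the maximizer $u_\rho$ are exactly what forces the one-sided derivative bounds to coincide. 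I would first reduce the theorem to the single claim that, for every $\tilde\rho$ with $\int_{\mathbb{R}^3}\tilde\rho=0$ and $\rho+\varepsilon\tilde\rho$ admissible for $|\varepsilon|$ small,
\[
\lim_{\varepsilon\to0}\frac{V_{\rm ee}^{\rm SCE}[\rho+\varepsilon\tilde\rho]-V_{\rm ee}^{\rm SCE}[\rho]}{\varepsilon}=\int_{\mathbb{R}^3}u_\rho\,\tilde\rho .
\]
Indeed, once this holds, $v_{\rm SCE}[\rho]$ satisfies \eqref{functional-derivative} if and only if $\int_{\mathbb{R}^3}(v_{\rm SCE}[\rho]-u_\rho)\,\tilde\rho=0$ for all such $\tilde\rho$, i.e.\ if and only if $v_{\rm SCE}[\rho]-u_\rho$ is a.e.\ constant, which is \eqref{sce-kantorovich}; the converse implication is immediate because adding a constant $C$ to $u_\rho$ changes the pairing only by $C\int_{\mathbb{R}^3}\tilde\rho=0$.

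To prove the displayed limit I would establish two matching one-sided estimates. \textbf{Lower bound:} since $u_\rho\in K$ is also admissible in the dual problem for the perturbed density, $V_{\rm ee}^{\rm SCE}[\rho+\varepsilon\tilde\rho]\ge\int_{\mathbb{R}^3}u_\rho(\rho+\varepsilon\tilde\rho)=V_{\rm ee}^{\rm SCE}[\rho]+\varepsilon\int_{\mathbb{R}^3}u_\rho\tilde\rho$ for every $\varepsilon$, so the difference quotient is $\ge\int_{\mathbb{R}^3}u_\rho\tilde\rho$ for $\varepsilon>0$ and $\le\int_{\mathbb{R}^3}u_\rho\tilde\rho$ for $\varepsilon<0$. \textbf{Upper bound:} let $u_{\rho+\varepsilon\tilde\rho}$ be the (by hypothesis unique) maximizer for $\rho+\varepsilon\tilde\rho$; since $u_{\rho+\varepsilon\tilde\rho}\in K$ it is a competitor for $\rho$, hence $\int_{\mathbb{R}^3}u_{\rho+\varepsilon\tilde\rho}\,\rho\le V_{\rm ee}^{\rm SCE}[\rho]$ and therefore $V_{\rm ee}^{\rm SCE}[\rho+\varepsilon\tilde\rho]=\int_{\mathbb{R}^3}u_{\rho+\varepsilon\tilde\rho}(\rho+\varepsilon\tilde\rho)\le V_{\rm ee}^{\rm SCE}[\rho]+\varepsilon\int_{\mathbb{R}^3}u_{\rho+\varepsilon\tilde\rho}\tilde\rho$, so the difference quotient is $\le\int_{\mathbb{R}^3}u_{\rho+\varepsilon\tilde\rho}\tilde\rho$ for $\varepsilon>0$ and $\ge\int_{\mathbb{R}^3}u_{\rho+\varepsilon\tilde\rho}\tilde\rho$ for $\varepsilon<0$.

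It then remains to pass to the limit. As $\varepsilon\to0$ one has $\rho+\varepsilon\tilde\rho\to\rho$ in $L^1(\mathbb{R}^3)$, so assumption \eqref{u_rho_continuous} gives $u_{\rho+\varepsilon\tilde\rho}\stackrel{*}{\rightharpoonup}u_\rho$; testing this convergence against the fixed $L^1$ function $\tilde\rho$ yields $\int_{\mathbb{R}^3}u_{\rho+\varepsilon\tilde\rho}\tilde\rho\to\int_{\mathbb{R}^3}u_\rho\tilde\rho$. Sandwiching the difference quotient between the lower and upper bounds above (for $\varepsilon>0$ and $\varepsilon<0$ separately) then shows that both one-sided limits exist and equal $\int_{\mathbb{R}^3}u_\rho\tilde\rho$, giving the displayed limit; combined with the reduction of the first paragraph this proves the equivalence.

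The main obstacle is this last limit passage together with the tacit well-posedness underlying it. One must guarantee that $\rho+\varepsilon\tilde\rho$ is, for $|\varepsilon|$ small, still an admissible density (nonnegative, of mass $N$, with $V_{\rm ee}^{\rm SCE}$ finite) so that $u_{\rho+\varepsilon\tilde\rho}$ and all the pairings above make sense, and one must know that the weak-$*$ convergence furnished by \eqref{u_rho_continuous} is against a space containing $\tilde\rho$. This is precisely where the paper's ``physically expected'' regularity caveats enter: in practice I would restrict to perturbations $\tilde\rho$ supported in a region where $\rho$ is bounded away from $0$ — a class still rich enough to determine the functional derivative — on which nonnegativity of $\rho+\varepsilon\tilde\rho$ for small $\varepsilon$ is automatic, and I would read \eqref{u_rho_continuous} as weak-$*$ convergence in $L^\infty(\mathbb{R}^3)=\big(L^1(\mathbb{R}^3)\big)^*$, so that pairing against $\tilde\rho\in L^1(\mathbb{R}^3)$ is legitimate.
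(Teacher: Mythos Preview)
Your proposal is correct and follows essentially the same approach as the paper: both derive the sandwich inequality $\int_{\mathbb{R}^3}u_\rho\tilde\rho\le D_\varepsilon\le\int_{\mathbb{R}^3}u_{\rho+\varepsilon\tilde\rho}\tilde\rho$ by using the dual maximizers as competitors in each other's problems, then pass to the limit via \eqref{u_rho_continuous}, and finally observe that the derivative is determined only up to an additive constant because the test functions have zero mean. Your write-up is in fact a bit more complete than the paper's, since you treat both signs of $\varepsilon$, explicitly name the Danskin/envelope mechanism at work, and flag the admissibility and duality-pairing technicalities that the paper absorbs into its ``physically expected'' assumptions.
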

\begin{proof}
For any given single-particle density $\rho$ with $\int_{\mathbb{R}^3}\rho=N$,
and any perturbation $\tilde{\rho}$ with $\int_{\mathbb{R}^3}\tilde{\rho}=0$, we define
\begin{equation*}
D_{\varepsilon} = \frac{V_{\rm ee}^{\rm SCE}[\rho+\varepsilon\tilde{\rho}]
-V_{\rm ee}^{\rm SCE}[\rho]}{\varepsilon}.
\end{equation*}
For simplicity, we assume $\varepsilon>0$. We have from \eqref{kantorovich} that
\begin{equation}\label{proof-2-a}
D_{\varepsilon} = \frac{\int_{\mathbb{R}^3}u_{\rho+\varepsilon\tilde{\rho}}\,(\rho+\varepsilon\tilde{\rho})
- \int_{\mathbb{R}^3}u_{\rho}\,\rho}{\varepsilon}.
\end{equation}
Using the fact that $u_{\rho+\varepsilon\tilde{\rho}}$ and $u_{\rho}$ are maximizers of \eqref{kantorovich}
with electron density $\rho+\varepsilon\tilde{\rho}$ and $\rho$ respectively, we have
\begin{equation}\label{proof-2-b}
\int_{\mathbb{R}^3}u_{\rho}\,(\rho+\varepsilon\tilde{\rho}) \leq
\int_{\mathbb{R}^3}u_{\rho+\varepsilon\tilde{\rho}}\,(\rho+\varepsilon\tilde{\rho})
\quad{\rm and}\quad
-\int_{\mathbb{R}^3}u_{\rho}\,\rho \leq -\int_{\mathbb{R}^3}u_{\rho+\varepsilon\tilde{\rho}}\,\rho.
\end{equation}
Substituting \eqref{proof-2-b} into \eqref{proof-2-a} gives
\begin{equation}\label{proof-2-c}
\int_{\mathbb{R}^3}u_{\rho}\,\tilde{\rho} \leq D_{\varepsilon}
\leq \int_{\mathbb{R}^3}u_{\rho+\varepsilon\tilde{\rho}}\,\tilde{\rho}.
\end{equation}
Under the uniqueness and continuity assumption \eqref{u_rho_continuous},
the right-hand side of \eqref{proof-2-c} converges to the left-hand side as $\varepsilon\rightarrow 0$.
Hence, for any $\tilde{\rho}$ with $\int_{\mathbb{R}^3}\tilde{\rho}=0$,
$\lim_{\varepsilon\rightarrow 0}D_{\varepsilon}$ exists and equals $\int_{\mathbb{R}^3}u_{\rho}\tilde{\rho}$.
This together with definition \eqref{functional-derivative} leads to $v_{\rm SCE}[\rho]=u_{\rho}$.

Note that the map
$\tilde{\rho}\mapsto\int_{\mathbb{R}^3}\frac{\delta V_{\rm ee}^{\rm SCE}[\rho]}{\delta\rho}\tilde{\rho}$
is unique up to an additive constant since $\int_{\mathbb{R}^3}\tilde{\rho}=0$.
Therefore, the functional derivative viewed as a function can be modified by any additive constant $C$.
This completes the proof.
\end{proof}

The Kantorovich potential $u_{\rho}$ is related to the co-motion functions in the Monge formulation, as noted and justified in \cite{seidl07}. In what follows, we give a more mathematical derivation of this relation, which avoids the interpretation of the effective potential as a Lagrange multiplier and clarifies the relationship between the variational principle \eqref{SGS}, introduced in \cite{seidl07}, and the Kantorovich dual variational principle. Note that the interpretation of the effective potential as a Lagrange multiplier coming from the marginal constraint is heuristically correct, but difficult to make rigorous, the difficulties being related to the notorious ``$v$-representability-problem'', as will be discussed elsewhere.
\begin{theorem}\label{theo-potentials}
Let $\rho_N$ be the minimizer of the optimal transport problem \eqref{energy-sce} with given single-particle density $\rho$.
If $u_{\rho}$ is the Kantorovich potential, i.e., the maximizer of \eqref{kantorovich}, and $u_\rho$ is differentiable, then
\begin{equation}\label{v_rhoN}
\nabla u_{\rho}({\bf r}) = \nabla_{{\bf r}}c_{\rm ee}
({\bf r},{\bf r}_2,\dots,{\bf r}_N)~~\text{on}~~\mathrm{supp}(\rho_N).
\end{equation}
In particular, if $\rho_N$ is of the Monge form \eqref{eq-rhoN}, then
\begin{equation}\label{nabla-v}
\nabla u_{\rho}({\bf r}) = \nabla_{\bf r} c_{\rm ee}({\bf r},{\bf r}_2,\dots,{\bf r}_N)
\mid_{{\bf r}_2=T_2({\bf r}),\dots,{\bf r}_N=T_N({\bf r}_N)}.
\end{equation}
\end{theorem}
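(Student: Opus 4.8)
The plan is to obtain the claimed gradient identity from the optimal-transport complementary slackness relation between the Kantorovich potential $u_\rho$ and the optimal plan $\rho_N$, and then to invoke a first-order optimality condition with the last $N-1$ coordinates frozen.

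First I would write down the no-duality-gap identity (valid by \cite{buttazzo12}): the primal minimum in \eqref{energy-sce} equals the dual maximum in \eqref{kantorovich}, i.e.
\[
\int_{\mathbb{R}^{3N}} c_{\rm ee}({\bf r}_1,\dots,{\bf r}_N)\,d\rho_N
= V_{\rm ee}^{\rm SCE}[\rho]
= \int_{\mathbb{R}^3} u_\rho({\bf r})\,\rho({\bf r})\,d{\bf r}.
\]
Since $\rho_N$ is symmetric and each of its one-particle marginals equals $\rho/N$, the last integral equals $\int_{\mathbb{R}^{3N}}\sum_{i=1}^N u_\rho({\bf r}_i)\,d\rho_N$. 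Subtracting the two expressions for $V_{\rm ee}^{\rm SCE}[\rho]$ gives
\[
\int_{\mathbb{R}^{3N}}\Big(c_{\rm ee}({\bf r}_1,\dots,{\bf r}_N)-\sum_{i=1}^N u_\rho({\bf r}_i)\Big)\,d\rho_N = 0 .
\]
The integrand is nonnegative by the dual constraint in \eqref{kantorovich}, and $\rho_N\geq 0$, so it vanishes $\rho_N$-almost everywhere; by continuity of $u_\rho$ and finiteness of $c_{\rm ee}$ on $\mathrm{supp}(\rho_N)$ one upgrades this to $c_{\rm ee}({\bf r}_1,\dots,{\bf r}_N)=\sum_{i=1}^N u_\rho({\bf r}_i)$ on all of $\mathrm{supp}(\rho_N)$.

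Next I would fix $(\bar{\bf r},\bar{\bf r}_2,\dots,\bar{\bf r}_N)\in\mathrm{supp}(\rho_N)$ and set $g({\bf r}):=c_{\rm ee}({\bf r},\bar{\bf r}_2,\dots,\bar{\bf r}_N)-u_\rho({\bf r})-\sum_{i=2}^N u_\rho(\bar{\bf r}_i)$. The dual constraint gives $g\geq 0$ on $\mathbb{R}^3$, while the slackness identity gives $g(\bar{\bf r})=0$, so $\bar{\bf r}$ is an interior global minimizer of $g$. Collision configurations carry no $\rho_N$-mass (they would make $c_{\rm ee}$, hence $V_{\rm ee}^{\rm SCE}$, infinite), so $\bar{\bf r}\neq\bar{\bf r}_i$ for $i\geq2$ and $c_{\rm ee}(\cdot,\bar{\bf r}_2,\dots,\bar{\bf r}_N)$ is smooth at $\bar{\bf r}$; combined with the assumed differentiability of $u_\rho$, this makes $g$ differentiable at $\bar{\bf r}$, and $\nabla g(\bar{\bf r})=0$ is exactly \eqref{v_rhoN}. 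For the Monge case \eqref{eq-rhoN}, $\mathrm{supp}(\rho_N)$ lies in the set $M$ of \eqref{set-comotion}, so for $\rho$-a.e.\ ${\bf r}$ the point $({\bf r},T_2({\bf r}),\dots,T_N({\bf r}))\in\mathrm{supp}(\rho_N)$, and applying \eqref{v_rhoN} with $\bar{\bf r}_i=T_i({\bf r})$ yields \eqref{nabla-v}.

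The main obstacle is the passage from ``$\rho_N$-almost everywhere'' to ``everywhere on $\mathrm{supp}(\rho_N)$'' in the slackness identity: this is where the standing regularity hypotheses on $u_\rho$ enter, together with the fact — true but needing a short justification — that the optimal plan avoids the singular set $\{{\bf r}_i={\bf r}_j\}$ of the Coulomb cost. Once that point is granted, the remainder is a routine interior-minimum first-order argument.
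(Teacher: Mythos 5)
Your proposal is correct, and it reaches the key inequality-plus-equality structure (each support point of the optimal plan is a global minimizer of $({\bf r}_1,\dots,{\bf r}_N)\mapsto c_{\rm ee}-\sum_i u_\rho({\bf r}_i)$, hence a first-order stationary point) by a genuinely different route from the paper. You invoke the no-duality-gap theorem of \cite{buttazzo12} as a black box and then run the standard optimal-transport complementary-slackness argument: strong duality plus the marginal condition give $\int(c_{\rm ee}-\sum_i u_\rho({\bf r}_i))\,d\rho_N=0$, nonnegativity of the integrand (the dual constraint) forces it to vanish $\rho_N$-a.e., and continuity upgrades this to the whole support. The paper instead proves $V_{\rm ee}^{\rm SCE}$ convex, identifies it with its double Legendre transform to obtain the Seidl--Gori-Giorgi--Savin variational principle \eqref{SGS}, shows via Lemma~\ref{lemma-outerSGS} that the inner minimizer at the outer maximizer $v_0$ solves the original problem and has support in the minimizing set of $c_{\rm ee}-\mathcal{V}_0$, and only afterwards proves $v_0=u_\rho-\mu$ by rederiving the Kantorovich dual from \eqref{SGS}. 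Your argument is shorter and more elementary, leaning entirely on the cited strong-duality result; the paper's is longer but self-contained and, by design, establishes the equivalence between the SGS principle and the Kantorovich dual, which the authors explicitly state is one of the purposes of this section. One point you flag yourself but should not understate: the upgrade from $\rho_N$-a.e.\ to everywhere on $\mathrm{supp}(\rho_N)$, and the exclusion of the collision set, need a clean statement. The argument is that on the $\rho_N$-full set where $c_{\rm ee}=\sum_i u_\rho({\bf r}_i)$, the right-hand side is locally bounded (since $u_\rho$ is differentiable, hence continuous), so $c_{\rm ee}$ is locally bounded on that set, which keeps it uniformly away from collisions on compacts; then continuity of both sides extends the identity to the closure, i.e., to $\mathrm{supp}(\rho_N)$, which in particular cannot meet the diagonal. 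This is the same implicit smoothness the paper relies on when it treats $\rho_N$ as supported where $\nabla(c_{\rm ee}-\mathcal{V})=0$, so your treatment is no less rigorous than theirs.
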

\begin{proof}
We first note that $V_{\rm ee}^{\rm SCE}[\rho]$ is convex. To see this, let $\rho$ be a convex combination
$(1-t)\rho_A+t\rho_B$ for some $t\in(0,1)$, and $\rho_N^A$, $\rho_N^B$ be the minimizers of \eqref{energy-sce}
corresponding to the single-particle densities $\rho_A$ and $\rho_B$, we have
\begin{equation}\label{sce-convex}
V_{\rm ee}^{\rm SCE}[\rho]\leq (1-t)V_{\rm ee}[\rho_N^A]+tV_{\rm ee}[\rho_N^B]
=(1-t)V_{\rm ee}^{\rm SCE}[\rho_A]+tV_{\rm ee}^{\rm SCE}[\rho_B].
\end{equation}

Since $V_{\rm ee}^{\rm SCE}[\rho]$ is convex, it equals its double Legendre transform.
Denoting the Legendre transform of a functional $F$ by $F^*$, we have
\begin{equation*}
V_{\rm ee}^{\rm SCE*}[v]=\max_{\rho}\left(\int_{\mathbb{R}^3}v\rho-V_{\rm ee}^{\rm SCE}[\rho]\right).
\end{equation*}
By combining the maximization over $\rho$ and minimization over $\rho_N\mapsto\rho$ in \eqref{energy-sce},
we have
\begin{equation}\label{proof-1-a}
\begin{split}
-V_{\rm ee}^{\rm SCE*}[v] &= \min_{\rho}\left(V_{\rm ee}^{\rm SCE}[\rho]-\int_{\mathbb{R}^3}v\rho\right) \\
&= \min_{\rho}\min_{\rho_N\mapsto\rho}
\left(\int_{\mathbb{R}^{3N}}c_{\rm ee}\rho_N - \int_{\mathbb{R}^3}v\rho\right) \\
&= \min_{\rho_N\mapsto\rho} \int_{\mathbb{R}^{3N}}\rho_N({\bf r}_1,\dots,{\bf r}_N)
\left(c_{\rm ee}({\bf r}_1,\dots,{\bf r}_N)-\sum_{i=1}^N v({\bf r}_i)\right).
\end{split}
\end{equation}
Then the double Legendre transform is
\begin{equation}\label{proof-1-b}
V_{\rm ee}^{\rm SCE**}[\rho]=\max_v\left(\int_{\mathbb{R}^3}v\rho-V_{\rm ee}^{\rm SCE*}[v]\right).
\end{equation}
Combining \eqref{proof-1-a}, \eqref{proof-1-b}, and the fact that $V_{\rm ee}^{\rm SCE}$ equals its double
Legendre transform results in the following variational principle
\begin{equation}\label{SGS}
V_{\rm ee}^{\rm SCE}[\rho] = \max_v\left( \int_{\mathbb{R}^3}v\rho
+ \min_{\rho_N}\int_{\mathbb{R}^{3N}}\big(c_{\rm ee}-\sum_{i=1}^N v({\bf r}_i)\big)\rho_N \right).
\end{equation}
Note that the constraint $\rho_N\mapsto\rho$ has been eliminated in \eqref{SGS}.
For any fixed $v$, let $\mathcal{V}({\bf r}_1,\dots,{\bf r}_N)=\sum_{i=1}^N v({\bf r}_i)$.
The inner variational principle of \eqref{SGS} reads
\begin{equation*}
\min_{\rho_N}\int_{\mathbb{R}^{3N}}(c_{\rm ee}-\mathcal{V})\rho_N.
\end{equation*}
Since $c_{\rm ee}-\mathcal{V}$ is a pure multiplicative operator, it follows (provided $v$ is differentiable) that the support of
any minimizer $\rho_N$ must be contained in the set of absolute minimizers of $c_{\rm ee}-\mathcal{V}$.
Note that on the latter set, $\nabla(c_{\rm ee}-\mathcal{V})=0$. Therefore, we have
\begin{equation}\label{proof-1-c}
\nabla_{{\bf r}_i}(c_{\rm ee}-\mathcal{V}) = \nabla_{{\bf r}_i}c_{\rm ee}({\bf r}_1,\dots,{\bf r}_N)
- \nabla_{{\bf r}_i}v({\bf r}_i) = 0 \quad\text{on }\mathrm{supp}(\rho_N)
\end{equation}
for $i=1,\dots,N$.

According to Lemma~\ref{lemma-outerSGS} in the appendix, if $v_0$ is a maximizer of \eqref{SGS},
then the corresponding minimizer $\rho_N^0$ of the inner optimization of \eqref{SGS}
is exactly the minimizer of the original problem \eqref{energy-sce}.
Therefore, \eqref{proof-1-c} implies that if $\rho_N^0$ is a minimizer of \eqref{energy-sce}, and $v_0$ is differentiable, then
\begin{equation}\label{proof-1-z}
\nabla_{{\bf r}_i}v_0({\bf r}_i) = \nabla_{{\bf r}_i}c_{\rm ee}({\bf r}_1,\dots,{\bf r}_N)
\quad{\rm on~supp}(\rho_N^0),~~i=1,\dots,N.
\end{equation}

In order to obtain \eqref{v_rhoN}, it is now only necessary to show that $u_{\rho}({\bf r})=v_0({\bf r})+\mu$ with some constant $\mu$. Note that the maximum value of \eqref{SGS} is invariant under changing $v$ by an additive constant, because the two integrals involving $v$ cancel. Therefore, the maximization over $v$ in \eqref{SGS} may be restricted to $v$'s with the additional property
\begin{equation}\label{proof-1-d}
\min_{({\bf r}_1,\dots,{\bf r}_N)\in\mathbb{R}^{3N}}\left(
c_{\rm ee}({\bf r}_1,\dots,{\bf r}_N) - \sum_{i=1}^N v({\bf r}_i) \right)= 0.
\end{equation}
For these $v$'s, the minimization over $\rho_N$ in \eqref{SGS} can be carried out explicitly (just place the the support of $\rho_N$ at the global minimizers of $c_{\rm ee}-\mathcal{V}$).
It then follows from \eqref{SGS} that
\begin{equation}\label{proof-1-e}
V_{\rm ee}^{\rm SCE}[\rho] = \max\left\{\int_{\mathbb{R}^3}v\rho,~v~{\rm satisfies}~\eqref{proof-1-d}\right\}.
\end{equation}
Since $\int_{\mathbb{R}^3}(v + C)\rho$ is increasing as a function of the additive constant $C$, condition \eqref{proof-1-d}
can be changed into the inequality
\begin{equation}\label{proof-1-f}
\sum_{i=1}^{N}v({\bf r}_i)\leq\sum_{1\leq i<j\leq N}\frac{1}{|{\bf r}_i-{\bf r}_j|}
\end{equation}
without affecting the maximal value in \eqref{proof-1-e}. This yields the Kantorovich dual form \eqref{kantorovich}.
Therefore, any maximizer of \eqref{SGS} satisfies that $u_{\rho}({\bf r})=v_0({\bf r})+\mu$ with some constant $\mu$,
which together with \eqref{proof-1-z} implies \eqref{v_rhoN}.

If the minimizer of \eqref{energy-sce} is of the Monge form \eqref{eq-rhoN},
then we have ${\rm supp}(\rho_N)\subset M$ with $M$ given by \eqref{set-comotion}.
This implies \eqref{nabla-v} and completes the proof.
\end{proof}

Since the Coulomb cost function $c_{\rm ee}$ is given by \eqref{cost_Coulomb}, Eq.~\eqref{nabla-v} is reduced to the following relation according to Theorem~\ref{theo-potentials} (see also \cite{seidl07})
\begin{equation}\label{eq-NbodyOT}
\nabla u_{\rho}({\bf r}) = -\sum_{i=2}^N\frac{{\bf r}-T_i({\bf r})}{|{\bf r}-T_i({\bf r})|^3}.
\end{equation}
In case $N=2$, there is only one co-motion function $T$, and
\begin{equation}\label{eq-2bodyOT}
\nabla u_{\rho}({\bf r}) = -\frac{{\bf r}-T({\bf r})}{|{\bf r}-T({\bf r})|^3}.
\end{equation}
Note that solving this equation for $T({\bf r})$ gives an instance of the celebrated Gangbo-McCann formula \cite{GangboMcCann1996} for the optimal map in terms of the Kantorovich potential. For the Coulomb cost this formula takes the form \cite{cotar13a}
\begin{equation}\label{eq-GangboMcCannFormula}
T({\bf r}) = {\bf r} + \frac{\nabla u_{\rho}({\bf r})}{|\nabla u_\rho({\bf r})|^{3/2}}.
\end{equation}
However, unlike \eqref{eq-GangboMcCannFormula}, formula \eqref{eq-2bodyOT} generalizes (in the form of \eqref{eq-NbodyOT}) to many-body or multi-marginal problems. Thus formula \eqref{eq-NbodyOT} should be viewed as the correct generalization of the Gangbo-McCann formula to multi-marginal problems. We note that our derivation did not make use of Coulombic features of the cost; the same arguments yield a version of Eq.~\eqref{eq-NbodyOT} for general pair costs of form $c_{\rm ee}({\bf r}_1,\dots,{\bf r}_N) = \sum_{i<j} w({\bf r}_i-{\bf r}_j)$, or Eq.~\eqref{nabla-v} for fully general costs.

Using Theorem \ref{theo-derivative} and \ref{theo-potentials}, we can derive the Kohn-Sham equations corresponding to the SCE energy functional \eqref{eq-energy-sce} with a computable effective potential.
It is the Euler-Lagrange equation corresponding to this minimization problem
(after a unitary transformation to diagonalize the symmetric $N\times N$ matrix of Lagrange multipliers):
find $\lambda_i\in\mathbb{R}, ~\phi_i\in H^1(\mathbb{R}^3)~(i=1,2,\dots,N)$ such that
\begin{equation}\label{ks-sce}
\left\{ \begin{array}{rcl}
\Big(-\frac{1}{2}\Delta + v_{\rm ext} + v_{\rm SCE}[\rho_{\Phi}]\Big)
\phi_i &=& \lambda_i\,\phi_i\quad \text{in}\quad\mathbb{R}^3, \quad i=1,2,\dots, N, \\[1ex]
\displaystyle \int_{\mathbb{R}^3}\phi_i\,\phi_j &=& \delta_{ij}.
\end{array} \right.
\end{equation}
This is a nonlinear eigenvalue problem, where the potential $v_{\rm SCE}[\rho_{\Phi}]$ depends on the electron density $\rho_{\Phi}$ associated with the orbitals $\phi_i$.
A self-consistent field (SCF) iteration algorithm is commonly resorted to for this nonlinear problem.
In each iteration step of the algorithm, a new effective potential is constructed from a trial electron
density and a linear eigenvalue problem is then solved to obtain the low-lying eigenvalues.

We shall comment further on the additive constant in \eqref{sce-kantorovich}. The above equations remain valid when $v_{\rm SCE}[\rho_{\Phi}]$ is modified by an arbitrary additive constant. This yields the same Kohn-Sham orbitals $\phi_i$ and only leads to a corresponding shift of the nonlinear eigenvalues $\lambda_i$. However, as pointed out in \cite{friesecke13}, it is only when $v_{\rm SCE}[\rho_{\Phi}]$ is precisely the Kantorovich potential that the ground state energy can equal the sum of Kohn-Sham eigenvalues, i.e.,
\begin{equation*}
E_0 = \sum_{i=1}^N \lambda_i.
\end{equation*}

In summary, an SCF algorithm for solving the Kohn-Sham equation \eqref{ks-sce} is given by
\begin{algorithm}\label{algorithm-ks_SCE}
SCF iterations for SCE-based Kohn-Sham equations
\begin{enumerate}
  \item Given $\epsilon>0$. Let $k=0$ and $\rho_0$ be an initial electron density.
  \item Calculate the co-motion functions from $\rho_k$
        (by using the numerical methods introduced in the next section).
  \item Calculate the effective potential $v_{\rm SCE}[\rho_k]$ by \eqref{eq-NbodyOT}.
  \item Solve the linear eigenvalue problem
        $$
        \left(-\frac{1}{2}\Delta + v_{\rm ext} + v_{\rm SCE}[\rho_{\Phi}]\right)\phi_i
        = \lambda_i\phi_i \quad i=1,\dots,N
        $$
        for low-lying eigenvalues to obtain new Kohn-Sham orbitals,
        from which a new electron density $\rho_k^{\rm out}$ can be calculated.
  \item If $\|\rho_k-\rho_k^{\rm out}\|<\epsilon$, stop; else,
        generate a new electron density $\rho_{k+1}$ by some charge mixing technique
        and go to 2.
\end{enumerate}
\end{algorithm}

\section{Numerical discretizations of optimal \\ transportation}
\label{section-numerical}

In each iteration of the SCF algorithm for solving \eqref{ks-sce}, one has to construct $v_{\rm SCE}[\tilde{\rho}]$ from a trial electron density $\tilde{\rho}$. According to \eqref{eq-NbodyOT}, this requires the solution of the optimal transport problem \eqref{energy-sce} with a given single-particle density to obtain the co-motion functions $T_i,~i=2,\dots,N$.
For simplicity, we only consider the case $N=2$, where only one co-motion function has to be calculated (which is denoted by $T$ in the following). For systems with more than two electrons, we refer to Section \ref{section-perspective} for a future perspective.

We discretize the computational domain into $n$ finite elements $e_1,\dots,e_n$. (We replace $\mathbb{R}^3$ by a bounded domain so that it can be discretized into a finite number of elements. This is reasonable since the electron density $\rho({\bf r})$ of a confined system decays exponentially fast to zero as $|{\bf r}|\rightarrow\infty$ \cite{hoffmann01}.) Each element is represented by a point ${\bf a}_k$ located at its barycenter and its electron mass
$\rho_k=\int_{e_k}\rho({\bf r})d{\bf r}$. Within this discretization, we can approximate the two-particle density $|\Psi({\bf r}_1,{\bf r}_2)|^2$ by a matrix $X=(x_{kl}) \in \mathbb{R}^{n \times n}$ with $x_{kl}=|\Psi({\bf a}_k,{\bf a}_l)|^2$. (Alternatively, one could identify the entries with the average $x_{kl} = \frac{1}{|e_k| \cdot |e_l|}\int_{e_k}\int_{e_l} |\Psi({\bf r}_1,{\bf r}_2)|^2 d{\bf r}_1 d{\bf r}_2$.)
%
The continuous problem \eqref{energy-sce} is then discretized into
\begin{equation}\label{ot_linprog}
\begin{array}{rl}
\displaystyle \min_{X} & \displaystyle \sum_{1\leq k,l\leq n} \frac{x_{kl}}{|{\bf a}_k-{\bf a}_l|} \\ \\
{\rm s.t.} & \sum_{1\leq k\leq n}x_{kl} = \frac{1}{2}\rho_k, \quad l=1,\dots,n \\[1ex]
& \sum_{1\leq l\leq n}x_{kl}=\frac{1}{2}\rho_l, \quad k=1,\dots,n \\[1ex]
& x_{kl}\geq 0. \end{array}
\end{equation}
Note that \eqref{ot_linprog} is a linear programming problem of the form
\begin{equation*}
\begin{split}
&\min_x f^T x \\
&{\rm s.t.}\quad A x = b ~{\rm and}~ x_k \ge 0,
\end{split}
\end{equation*}
where $x$ is the vector containing the entries of $X$. We can solve this problem by standard optimization routines like `\emph{linprog}' in \textsf{Matlab}. Due to the symmetry of the problem, one can assume that $x_{lk} = x_{kl}$ and only needs to consider $x_{kl}$ for $k \le l$.

As a remark, the dual problem of \eqref{ot_linprog} (in the sense of linear programming) results in a discretized version of the Kantorovich dual formulation \eqref{kantorovich}.

The solution of \eqref{ot_linprog} entails an approximation of the co-motion functions at the barycenters $\{{\bf a}_k\}_{1\leq k\leq n}$ via the matrix $X=(x_{kl})$:
\begin{equation}\label{T_approximate}
T_n({\bf a}_k)=\sum_{l=1}^n {\bf a}_l\frac{2 x_{kl}}{\rho_l},
\quad k=1,\dots,n,
\end{equation}
where $x_{kl}$ can also be regarded as the mass of electron transported from ${\bf a}_k$ to ${\bf a}_l$.
If the discretization is sufficiently fine, i.e., $n$ large enough, then $T_n$ is a good
approximation of $T$ (see the following numerical example).

For a \emph{uniform} discretization $\{e_k\}_{1\leq k\leq n}$,
the degrees of freedom for linear programming \eqref{ot_linprog} may be huge.
To reduce the computational cost, we use a locally refined mesh instead, which has more elements where the electron
density is high and less elements where the electron density is low.
Generally speaking, the optimal mesh may be such that each element $e_k$ has almost equal electron mass $\rho_k$.
This type of mesh can be generated by an adaptive procedure, say, one refines the element when its electron mass is
larger than a given threshold and coarse it otherwise.
Since the electron density decays exponentially fast to zero as $|{\bf r}|\rightarrow\infty$, the mesh is much coarser
far away from the nuclei than close to the nuclei, which reduces the degrees of freedom significantly.

As a remark, let us assume for a moment that all elements have exactly the same mass, $\rho_k = \bar{\rho}$ for all $k$. Then the constraints in \eqref{ot_linprog} force $X$ to be a \emph{doubly stochastic} matrix (up to a global scaling factor) with nonnegative entries. According to Birkhoff's theorem, the extremal points of the convex set of admissible matrices $X$ are the permutations, i.e., matrices with exactly one non-zero entry $\frac{1}{2} \bar{\rho}$ in each row (or column). Since the optimum is obtained at an extremal point, the optimizer can be chosen of this form. We have thus derived a discrete analogue of the Monge formulation, since the sum on the right of Eq.~\eqref{T_approximate} will have exactly one nonzero term.

Another important technique to reduce the computational cost is to exploit the symmetry of the system.
If the electron density $\rho$ has some kind of symmetric property, then we can reduce the computations
to some subdomain accordingly.
For example, \cite{seidl07} gives an explicit formula of co-motion functions for spherically symmetric
electron densities by making use of the symmetry. More precisely, it is proven that if the density has the form
$\rho({\bf r})=h(|{\bf r}|)$ with some function $h: [0,\infty) \rightarrow \mathbb{R}$,
then the corresponding co-motion function $T$ has to be spherically symmetric itself, that is
\begin{equation*}
T({\bf r})=g(|{\bf r}|)\frac{\bf r}{|\bf r|},\quad\forall~{\bf r}\in\mathbb{R}^3
\end{equation*}
with some function $g: [0,\infty) \rightarrow \mathbb{R}$.
This reduces the three-dimensional spherically symmetric problem into a one-dimensional problem.

Here we consider cylindrically symmetric systems, for instances, biatomic molecules.
The following theorem states that the co-motion function $T$ inherits the cylindrical symmetriy of the density.

\begin{theorem}\label{theo-symmetric}
Let $N=2$ and denote the cylindrical coordinates by $(\gamma,\varphi,z)$.
If $\rho({\bf r})=\varrho(\gamma,z)$ with some function $\varrho:[0,\infty)\times\mathbb{R}\rightarrow\mathbb{R}$,
then the corresponding co-motion function $T$ satisfies
\begin{equation}\label{T_symmetric}
T:~(\gamma,\varphi,z)\mapsto(\gamma',\varphi+\pi,z') \quad\forall~(\gamma,z)\in [0,\infty)\times\mathbb{R},
\end{equation}
where $(\gamma',z')=\ell(\gamma,z)$ with some map
$\ell:[0,\infty)\times\mathbb{R}\rightarrow [0,\infty)\times\mathbb{R}$.
\end{theorem}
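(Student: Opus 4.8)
The plan is to exploit the invariance of the optimal transport problem \eqref{energy-sce} under the symmetry group of the density, together with the assumed uniqueness of the Kantorovich potential, to show first that $T$ is equivariant under that group, and then to pin down the azimuthal shift by an energy comparison that is special to repulsive costs.

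First I would record the symmetries of $\rho$. Since $\rho({\bf r})=\varrho(\gamma,z)$ does not depend on the azimuthal angle, it is invariant under every rotation $R_\theta$ about the $z$-axis, and under the reflection $S\colon(\gamma,\varphi,z)\mapsto(\gamma,-\varphi,z)$ through a half-plane containing the axis. Each such $g$ is a Euclidean isometry, so the Coulomb cost $c_{\rm ee}$ is invariant under the diagonal action $({\bf r}_1,{\bf r}_2)\mapsto(g{\bf r}_1,g{\bf r}_2)$, and the constraint in \eqref{kantorovich} is preserved under $u\mapsto u\circ g^{-1}$. Hence, if $u_\rho$ maximizes \eqref{kantorovich}, so does $u_\rho\circ g^{-1}$, with the same objective value because $\rho\circ g^{-1}=\rho$; by the uniqueness assumption $u_\rho\circ g^{-1}=u_\rho$, i.e.\ $u_\rho$ is invariant under all $R_\theta$ and under $S$. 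Since $T$ is determined by $u_\rho$ through the Gangbo--McCann formula \eqref{eq-GangboMcCannFormula} (equivalently \eqref{eq-2bodyOT}), and since each $g$ is orthogonal, this invariance of $u_\rho$ translates into the equivariance $T\circ g=g\circ T$ for every such $g$. (Here $\nabla u_\rho({\bf r})\ne 0$ whenever $T({\bf r})\ne{\bf r}$, which holds for a.e.\ ${\bf r}$ because coincident electrons have infinite Coulomb cost, so the formula applies.)

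Next I would translate the equivariance into the coordinate statement. Writing $T$ in cylindrical coordinates as $T(\gamma,\varphi,z)=(\gamma',\varphi',z')$, equivariance under all rotations $R_\theta$ forces $\gamma'$ and $z'$ to depend only on $(\gamma,z)$ and $\varphi'-\varphi$ to be a function $\psi(\gamma,z)$ of $(\gamma,z)$ alone; this already gives $(\gamma',z')=\ell(\gamma,z)$ and $T\colon(\gamma,\varphi,z)\mapsto(\gamma',\varphi+\psi(\gamma,z),z')$. Applying equivariance under $S$ then yields $-\varphi+\psi(\gamma,z)\equiv-\varphi-\psi(\gamma,z)\pmod{2\pi}$, hence $2\psi(\gamma,z)\equiv 0\pmod{2\pi}$, so $\psi(\gamma,z)\in\{0,\pi\}$ for a.e.\ $(\gamma,z)$ (and $\psi$ is locally constant where $T$ is continuous).

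Finally I would rule out the value $\psi=0$ using optimality. Suppose $\psi=0$ on a set $B$ of positive measure; since $\psi$ depends only on $(\gamma,z)$, both $B$ and its image $T(B)$ are unions of full tori in the angular variable, hence $R_\pi$-invariant. Define $\widetilde T$ by composing $T$ with $R_\pi$ on $B$, so that the azimuthal shift there becomes $\pi$, and leaving $T$ unchanged elsewhere; because $\rho$ is $R_\pi$-invariant and $R_\pi$ has unit Jacobian, $\widetilde T$ still pushes $\rho$ forward to $\rho$ in the sense of \eqref{eq-constraint-comotion}, so the Monge measure \eqref{eq-rhoN} built from $\widetilde T$ (symmetrized if desired, which leaves the cost unchanged) is admissible in \eqref{energy-sce}. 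A direct computation gives $|{\bf r}-\widetilde T({\bf r})|^2=(\gamma+\gamma')^2+(z-z')^2\ge(\gamma-\gamma')^2+(z-z')^2=|{\bf r}-T({\bf r})|^2$ on $B$, with strict inequality wherever $\gamma>0$ and $\gamma'>0$; since $t\mapsto 1/t$ is strictly decreasing, $\widetilde T$ has strictly smaller Coulomb repulsion than $T$ unless $B$ lies on the axis, which is a null set. This contradicts the optimality of $T$, so $\psi\equiv\pi$ off the axis, which is \eqref{T_symmetric}. The step I expect to be most delicate is this last one --- verifying carefully that $\widetilde T$, obtained by flipping $T$ only on part of its support, is a genuinely admissible competitor in \eqref{energy-sce}; the symmetry-reduction steps are soft once uniqueness of $u_\rho$ is in hand, and the choice between $\psi=0$ and $\psi=\pi$ is the only place where the (strictly decreasing, repulsive) nature of the cost is used.
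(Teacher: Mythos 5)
Your proof is correct and follows essentially the same strategy as the paper's: establish equivariance of $T$ under the rotational symmetry of $\rho$, then pin down the azimuthal shift by an optimality argument. Two points where you diverge are worth noting. First, to derive equivariance of $T$ you pass through the Kantorovich potential $u_\rho$ (uniqueness of $u_\rho$, as assumed for Theorem~\ref{theo-derivative}, plus the Gangbo--McCann formula~\eqref{eq-GangboMcCannFormula}), whereas the paper argues directly on the level of the optimal plan $\rho_2$: it shows $\tilde\rho_2(\mathbf{r}_1,\mathbf{r}_2)=\rho_2(\mathcal R_\theta\mathbf{r}_1,\mathcal R_\theta\mathbf{r}_2)$ is also a minimizer of~\eqref{energy-sce} and invokes uniqueness of the optimal plan from~\cite{cotar13a} to conclude $T\circ\mathcal R_\theta=\mathcal R_\theta\circ T$. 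Both are legitimate; the paper's route avoids the differentiability caveats attached to Gangbo--McCann, while yours stays closer to the objects that Theorem~\ref{theo-derivative} is already built on. Second, your treatment of the final step is more explicit than the paper's. The paper, after reducing the cost to $2\pi\int\gamma\varrho(\gamma,z)\,|(\gamma,\varphi,z)-T(\gamma,\varphi,z)|^{-1}\,d\gamma\,dz$, simply asserts that $T$ must maximize the denominator; you spell out why the azimuthal shift $\psi(\gamma,z)$ is a genuinely free parameter (it does not affect the marginal constraint) and exhibit a strictly cheaper competitor $\widetilde T$ when $\psi\ne\pi$, which is the substance that the paper's terse statement compresses. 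Your detour through the reflection $S$ to first narrow $\psi$ to $\{0,\pi\}$ is not actually needed: the same competitor construction, comparing an arbitrary shift $\psi$ with $\pi$ via $\gamma^2+\gamma'^2-2\gamma\gamma'\cos\psi+(z-z')^2\le(\gamma+\gamma')^2+(z-z')^2$, works directly and more simply; but the detour is harmless.
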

\begin{proof}
Let $\mathcal{R}_{\theta}$ be the rotation operator with angle $\theta$ around the z-axis, i.e., $\mathcal{R}_{\theta}(\gamma,\varphi,z)=(\gamma,\varphi+\theta,z)$.
Let $\rho_2$ be a minimizer of \eqref{energy-sce} with single-particle electron density $\rho$,
such that $\rho_2({\bf r}_1,{\bf r}_2)=\frac{\rho({\bf r}_1)}{2}\delta({\bf r}_2-T({\bf r}_1))$
with $T$ the corresponding co-motion function.

Let $\tilde{\rho}_2({\bf r}_1,{\bf r}_2) = \rho_2(\mathcal{R}_{\theta}{\bf r}_1,\mathcal{R}_{\theta}{\bf r}_2)$.
We claim that $\tilde{\rho}_2$ is also a minimizer of \eqref{energy-sce}.
To see this, we observe that
\begin{equation*}
\tilde{\rho}_2({\bf r}_1,{\bf r}_2) = \frac{\rho(\mathcal{R}_{\theta}{\bf r}_1)}{2}
\delta(\mathcal{R}_{\theta}{\bf r}_2-T(\mathcal{R}_{\theta}{\bf r}_1)),
\end{equation*}
which satisfies the marginal constraint
\begin{equation}\label{proof-6-a}
\tilde{\rho}_2\mapsto\rho
\end{equation}
since $\rho({\bf r})=\rho(\mathcal{R}_{\theta}{\bf r})$.
Moreover, we have
\begin{equation}\label{proof-6-b}
\int_{\mathbb{R}^6}\frac{\tilde{\rho}_2({\bf r}_1,{\bf r}_2)}{|{\bf r}_1-{\bf r}_2|}d{\bf r}_1d{\bf r}_2
= \int_{\mathbb{R}^6}\frac{\rho_2({\bf r}_1,{\bf r}_2)}{|{\bf r}_1-{\bf r}_2|}d{\bf r}_1d{\bf r}_2
\end{equation}
from the fact that the cost function $\frac{1}{|{\bf r}_1-{\bf r}_2|}$ is invariant under the map
$({\bf r}_1,{\bf r}_2)\rightarrow(\mathcal{R}_{\theta}{\bf r}_1,\mathcal{R}_{\theta}{\bf r}_2)$.
\eqref{proof-6-a} and \eqref{proof-6-b} together imply that $\tilde{\rho}_2$ is a minimizer of \eqref{energy-sce}.

Therefore, $\tilde{T}:{\bf r}\mapsto\mathcal{R}_{\theta}^{-1}T(\mathcal{R}_{\theta}{\bf r})$
is also a co-motion function of this problem.
Because the co-motion function is unique in the case of two particles (see \cite{cotar13a}), we have
\begin{equation}\label{proof-3-a}
T(\mathcal{R}_{\theta}{{\bf r}})=\mathcal{R}_{\theta}T({\bf r}).
\end{equation}

Since we minimize
\begin{equation*}
\begin{split}
\int_{\mathbb{R}^3}\frac{\rho({\bf r})}{|{\bf r}-T({\bf r})|}d{\bf r}
&= \int_0^{\infty}\gamma d\gamma\int_{-\infty}^{\infty}dz \int_0^{2\pi}d\varphi
\frac{\rho((\gamma,\varphi,z))}{|(\gamma,\varphi,z)-T((\gamma,\varphi,z))|}d\varphi \\
&\stackrel{{\rm fix}~\varphi}{=} \int_0^{\infty}\gamma  d\gamma\int_{-\infty}^{\infty}dz
\int_0^{2\pi}d\theta \frac{\rho(\mathcal{R}_{\theta}(\gamma,\varphi,z))}{|\mathcal{R}_{\theta}
(\gamma,\varphi,z) - T(\mathcal{R}_{\theta}(\gamma,\varphi,z))|}d\theta \\
&\stackrel{\eqref{proof-3-a}}{=} \int_0^{\infty}\gamma d\gamma\int_{-\infty}^{\infty}dz
\int_0^{2\pi}d\theta \frac{\rho(\mathcal{R}_{\theta}(\gamma,\varphi,z))}{|\mathcal{R}_{\theta}
(\gamma,\varphi,z) - \mathcal{R}_{\theta}T((\gamma,\varphi,z))|}d\theta \\
&= 2\pi \int_0^{\infty} \int_{-\infty}^{\infty}
\frac{\gamma\varrho((\gamma,z))}{|(\gamma,\varphi,z) - T((\gamma,\varphi,z))|}d\gamma dz,
\end{split}
\end{equation*}
$T$ has to satisfy \eqref{T_symmetric} to maximize the denominator.
This completes the proof.
\end{proof}

{\bf Example.} We take a one-dimensional two-electron system as an example to illustrate our numerical method.
Although this type of algorithm is not particularly interesting for one-dimensional systems
since analytical formulations are known \cite{malet14,friesecke13},
it is more suitable to present the numerical results and explain our idea.

Let $\Omega=[-5,5]$ and $\rho(x)=0.4-0.08|x|$, see Figure \ref{fig-rho1d}.
The exact co-motion function can be calculated explicitly
\begin{equation}\label{exact1d_comotion}
T(x)=\begin{cases}
\hspace{8pt} 5\left(1-\big[1-0.5(x+5)(0.4+0.08x)\big]^{1/2}\right) & {\rm if}~x\leq 0, \\
-5\left(1-\big[1-0.5(-x+5)(0.4-0.08x)\big]^{1/2}\right) & {\rm if}~x>0.
\end{cases}
\end{equation}
We observe in Figure \ref{fig-comotion} that the numerical approximations of the co-motion function
can be very accurate (compared with the exact formula \eqref{exact1d_comotion}).
Figure~\ref{fig-convergence} shows the convergence of the uniform numerical approximations.
%
The nonuniform mesh (red elements in Figure~\ref{fig-rho1d}) achieves a much higher accuracy with the same degrees of freedom, see Figure~\ref{fig-adaptive}.
Actually, we observe that the errors of a nonuniform mesh with $n=20$ is even smaller compared to a uniform mesh with $n=40$ on average.

\begin{figure}[!ht]
\centering
\subfloat[electron density]{
\label{fig-rho1d}
\includegraphics[width=0.45\textwidth]{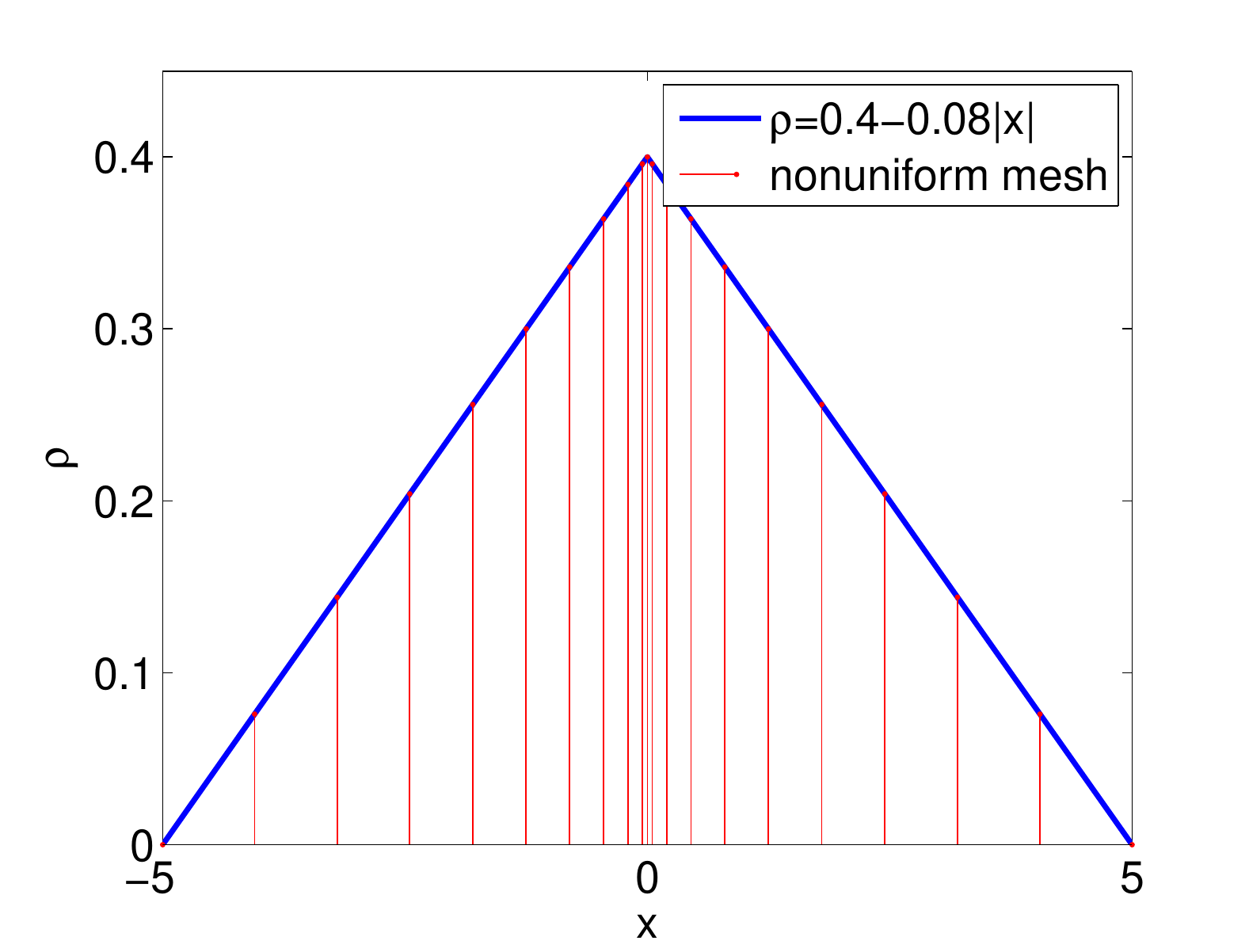}} \quad
\subfloat[co-motion function]{
\label{fig-comotion}
\includegraphics[width=0.45\textwidth]{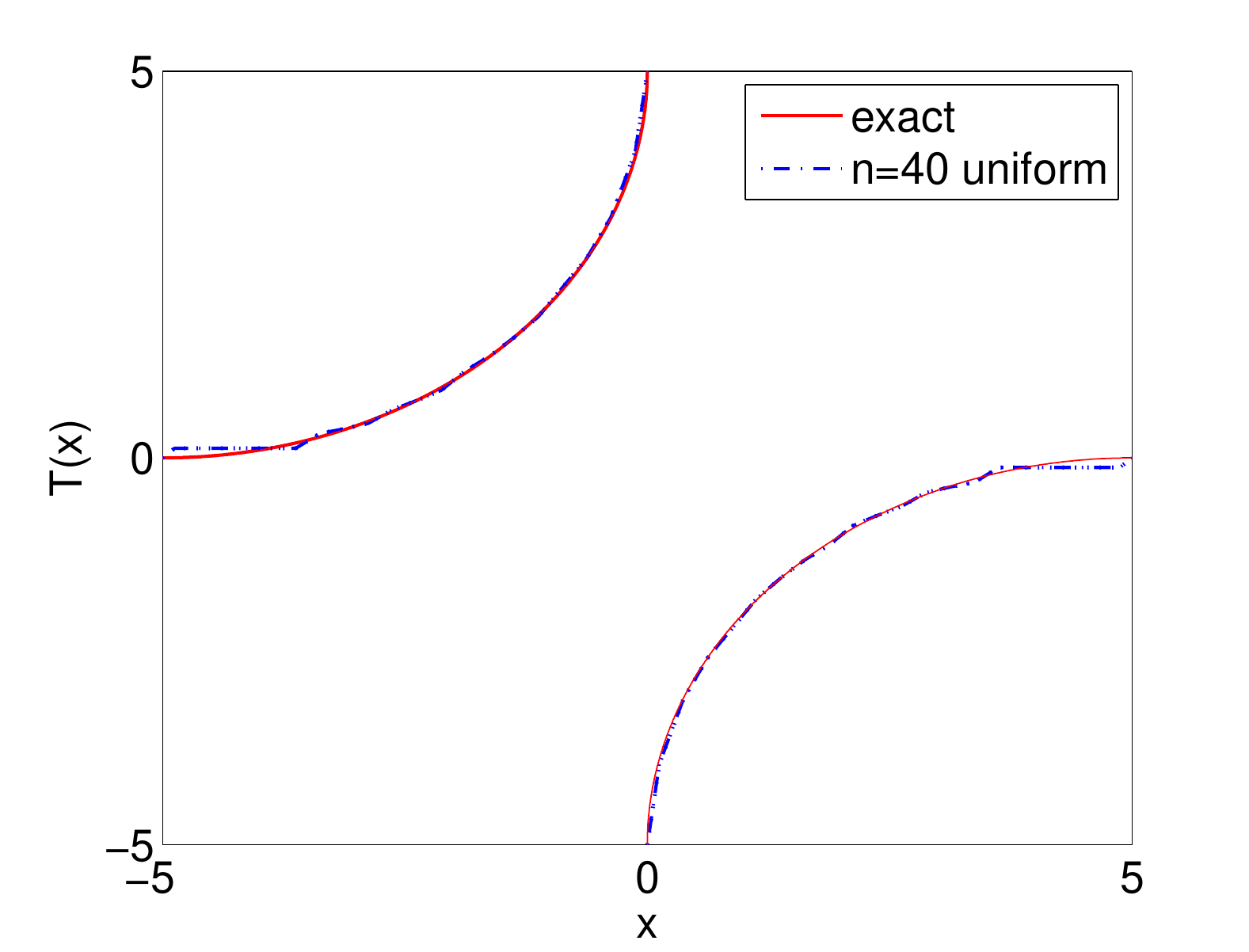}} \\
\subfloat[$T_n$ error, uniform mesh]{
\label{fig-convergence}
\includegraphics[width=0.45\textwidth]{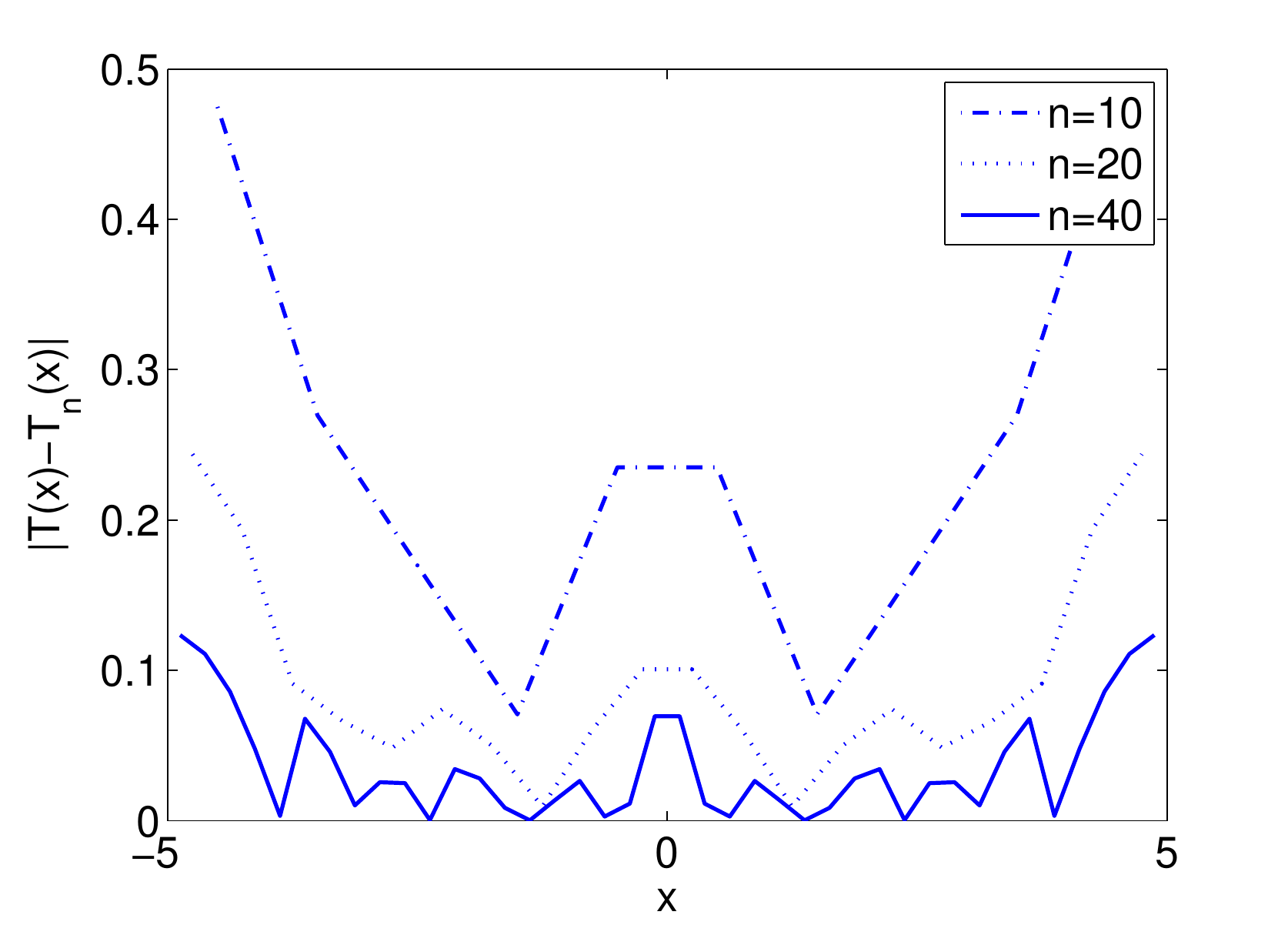}} \quad
\subfloat[$T_n$ error, nonuniform mesh]{
\label{fig-adaptive}
\includegraphics[width=0.45\textwidth]{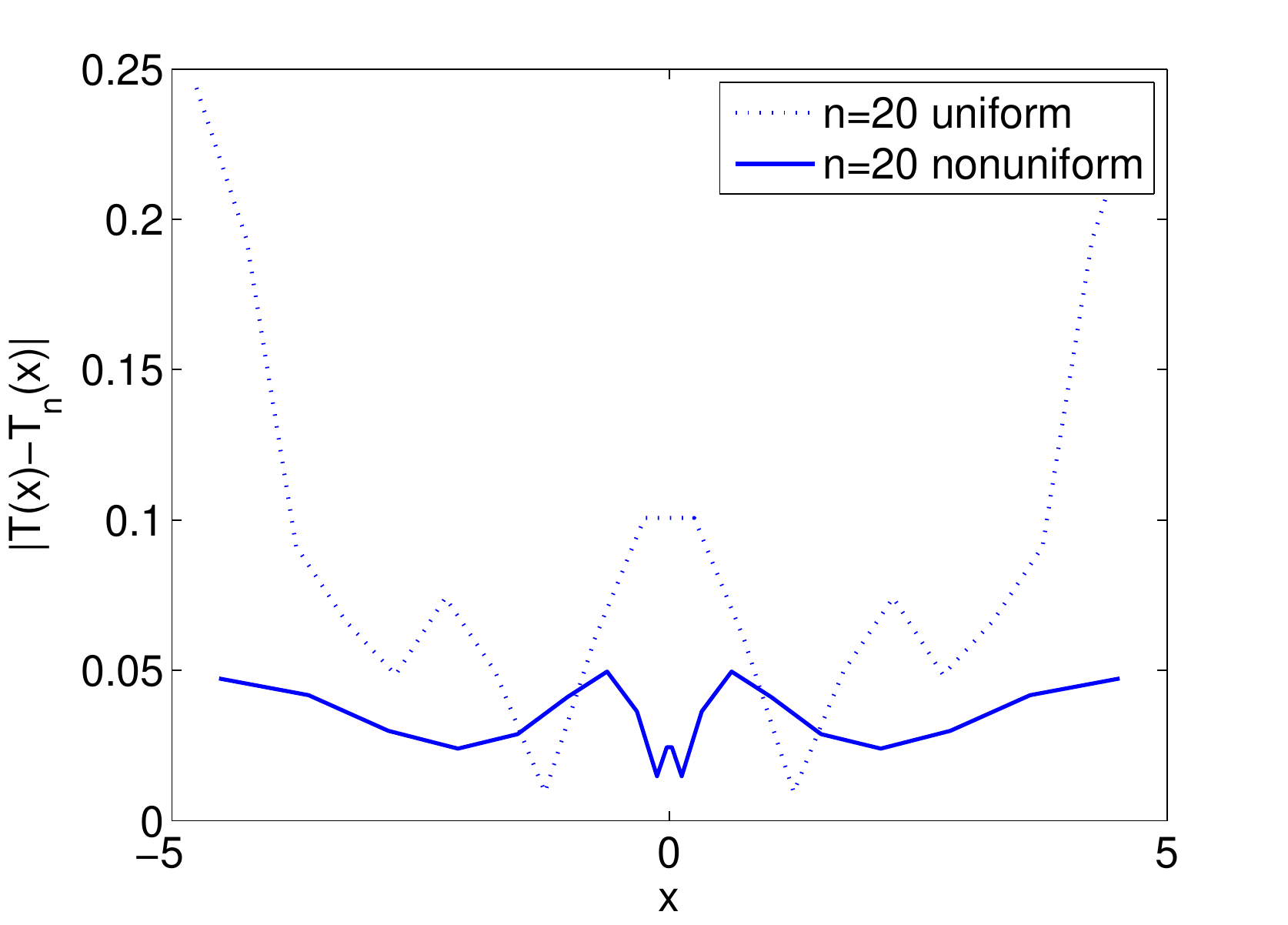}}
\caption{(a) The one-dimensional electron density $\rho$ and a nonuniform discretization. 
(b) The co-motion function corresponding to $\rho$ and its approximation. 
(c) Numerical errors of $T_n$ using uniform meshes, and (d) a nonuniform mesh.}
\end{figure}

Remember that the aforementioned electron density is symmetric in the sense of
\begin{equation*}
\rho(x)=\rho(-x).
\end{equation*}
Therefore, the corresponding co-motion function is symmetric itself, i.e.
\begin{equation*}
T(x) = -T(-x),
\end{equation*}
which maps $[-5,0]$ to $[0,5]$ and maps $[0,5]$ to $[-5,0]$.
Hence, it is only necessary to calculate $T(x)$ on half of the domain, say, $[-5,0]$.

An important application of the numerical methods introduced above is to simulate the H$_2$
molecule at its dissociating limit. We provide more details in the next section.

\section{H$_2$ bond disassociation}
\label{section-h2}

We consider H$_2$ molecule in this section.
Let $R>0$ and ${\bf R}_A=(-R,0,0),~{\bf R}_B=(R,0,0)$ be the locations of two hydrogen atoms. Physically, the hydrogen molecule should dissociate into two free hydrogen atoms as the bond length $2R\rightarrow\infty$, with the ground state spin-unpolarized.
The spin-restricted Hartree-Fock and Kohn-Sham DFT models give the correct spin multiplicity,
but overestimate total energies, i.e., higher than that of two free hydrogen atoms.
In comparison, the spin-unrestricted models give fairly good total energies, while the wave
functions are spin-contaminated, which is known as ``symmetry breaking'' in H$_2$ bond dissociation.

Here we focus on the SCE-DFT model without symmetry breaking, and
show both theoretically and numerically that the restricted Kohn-Sham model \eqref{min-sce}
gives the correct ground state energy in the dissociation limit $R\rightarrow\infty$.
Denote by $e_0$ the ground state energy of a single hydrogen atom
\begin{equation}\label{e_0}
e_0=\inf\left\{\frac{1}{2}\int_{\mathbb{R}^3}|\nabla\phi({\bf r})|^2d{\bf r}
-\int_{\mathbb{R}^3}\frac{|\phi({\bf r})|^2}{|\bf r|}d{\bf r},
~\phi\in H^1(\mathbb{R}^3),~\|\phi\|_{L^2(\mathbb{R}^3)}=1\right\}.
\end{equation}
$E_{\rm SCE}(R)$ denotes the ground state energy of the hydrogen molecule in the SCE-DFT model \eqref{eq-energy-sce}
\begin{multline}\label{E_R}
E_{\rm SCE}(R) = \frac{1}{2R} + \inf \Big\{\int_{\mathbb{R}^3}|\nabla\phi({\bf r})|^2d{\bf r}
+ 2\int_{\mathbb{R}^3}v_{\rm ext}({\bf r})|\phi({\bf r})|^2d{\bf r}
+ V_{\rm ee}^{\rm SCE}[2|\phi|^2], \\
\phi\in H^1(\mathbb{R}^3),~\|\phi\|_{L^2(\mathbb{R}^3)}=1\Big\}, \quad
\end{multline}
where $v_{\rm ext}({\bf r})=-\frac{1}{|{\bf r}-{\bf R}_A|}-\frac{1}{|{\bf r}-{\bf R}_B|}$.
The following result indicates that the SCE-DFT model is correct for the H$_2$ molecule at its dissociating limit.

\begin{theorem}\label{theo-sce-h2}
Let $e_0$ and $E_{\rm SCE}(R)$ be given by \eqref{e_0} and \eqref{E_R} respectively.
We have
\begin{equation}\label{sce-h2-limit}
\lim_{R\rightarrow\infty}E_{\rm SCE}(R) = 2e_0.
\end{equation}
\end{theorem}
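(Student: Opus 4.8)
The plan is to establish \eqref{sce-h2-limit} by proving the two matching inequalities $\limsup_{R\to\infty} E_{\rm SCE}(R) \le 2e_0$ and $\liminf_{R\to\infty} E_{\rm SCE}(R) \ge 2e_0$. For the upper bound, I would exhibit a good trial orbital. Let $\phi_0$ be a (near-)minimizer of \eqref{e_0} centered at the origin, decaying exponentially, and set $\phi_R({\bf r}) = 2^{-1/2}\big(\phi_0({\bf r}-{\bf R}_A) + \phi_0({\bf r}-{\bf R}_B)\big)$, suitably renormalized (the overlap correction is exponentially small in $R$). Then $\rho_{\phi_R} = 2|\phi_R|^2$ is, up to exponentially small errors, the sum of two well-separated bumps of mass $1$ each. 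I would bound each term in \eqref{E_R}: the kinetic term gives $2\cdot\frac12\int|\nabla\phi_0|^2$ plus exponentially small corrections; the external-potential term, since each bump sits on one nucleus and is exponentially far from the other, gives $2\cdot(-\int|\phi_0|^2/|{\bf r}|)$ plus an $O(1/R)$ interaction of a unit bump with the distant nucleus; the nuclear repulsion is exactly $1/(2R)$. The key point is $V_{\rm ee}^{\rm SCE}[2|\phi_R|^2]$: because the optimal transport plan can move the mass in one bump entirely into the other bump (which is legitimate since each has mass $1 = \rho/2$ integrated), the optimal Coulomb cost is at most the cost of transporting bump $A$ to bump $B$, which is $\approx 1/(2R) + o(1/R)$. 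Summing, $E_{\rm SCE}(R) \le 2e_0 + O(1/R)$, giving the $\limsup$ bound.

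For the lower bound, the natural route is to drop the Coulomb repulsion term, which is nonnegative: $V_{\rm ee}^{\rm SCE}[2|\phi|^2] \ge 0$. Hence $E_{\rm SCE}(R) \ge \frac{1}{2R} + \inf_{\|\phi\|=1}\big(\int|\nabla\phi|^2 + 2\int v_{\rm ext}|\phi|^2\big)$. The infimum on the right is $2$ times the ground-state energy of the one-electron two-center problem (hydrogen molecular ion $\mathrm{H}_2^+$ without nuclear repulsion, in these units), call it $2\,\epsilon(R)$. It is classical that $\epsilon(R) \to e_0$ as $R\to\infty$ — the ground state of the two-center attractive problem converges to that of a single hydrogen atom — but $\epsilon(R)$ is actually \emph{below} $e_0$ for finite $R$ (the electron benefits from both attractive centers), and $\frac{1}{2R} + 2\epsilon(R)$ might dip below $2e_0$. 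So this crude bound is \emph{not} sufficient: it gives $\liminf E_{\rm SCE}(R) \ge 2\lim\epsilon(R) = 2e_0$ only if we can show $\frac{1}{2R}$ compensates the binding energy $2(e_0 - \epsilon(R))$, which is itself $O(1/R)$ — so the constants matter and a careful comparison is needed.

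This is where I expect the main obstacle. The honest lower bound must use the Coulomb repulsion term to cancel the spurious attraction to the "wrong" nucleus. The strategy: for a minimizing sequence $\phi_R$, partition $\mathbb{R}^3$ into the half-space $H_A$ nearer ${\bf R}_A$ and $H_B$ nearer ${\bf R}_B$; by symmetry one may assume $\phi_R$ is symmetric, so each half carries mass $1$. A mass element at ${\bf r}\in H_A$ contributes attraction $-1/|{\bf r}-{\bf R}_A|$ (good, bounded below by $e_0$-type terms after localizing) plus $-1/|{\bf r}-{\bf R}_B|$ (spurious, size $\lesssim 1/R$ since ${\bf r}\in H_A$). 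Meanwhile the optimal transport plan, transporting $\rho/2$ restricted near ${\bf R}_A$, must send \emph{some} mass across to $H_B$ or pay a large self-repulsion within $H_A$; in either case one extracts a repulsion contribution $\gtrsim 1/R$ that cancels the spurious cross-attraction up to $o(1/R)$. Combined with the exact $+1/(2R)$ nuclear term and a Lieb–Simon/IMS-type localization estimate showing the localized kinetic-plus-true-attraction energy on each half is $\ge e_0 - o(1)$, one obtains $E_{\rm SCE}(R) \ge 2e_0 - o(1)$. Making the "$1/R$ of repulsion must appear" step quantitative — essentially a lower bound on the optimal Coulomb transport cost for a density split across two far-separated regions of equal mass — is the technical heart; I would isolate it as a lemma, proving it either via the Kantorovich dual (choosing $u$ supported appropriately) or by a direct geometric argument that any transport plan between two unit masses separated by distance $\sim R$ costs $\ge c/R$.
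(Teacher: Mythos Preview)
Your upper bound is essentially the same strategy as the paper's, with a minor difference in the trial function: the paper takes $\psi=\sqrt{\tfrac12(\phi_0(\cdot-{\bf R}_A)^2+\phi_0(\cdot-{\bf R}_B)^2)}$ rather than the normalized sum of orbitals. This choice makes the density exactly $\phi_0(\cdot-{\bf R}_A)^2+\phi_0(\cdot-{\bf R}_B)^2$ with no overlap corrections, so the kinetic, external-potential, and SCE estimates are clean (the SCE term is bounded by $O(1/R)$ via the translation map $T({\bf r})={\bf r}+{\bf R}_B-{\bf R}_A$). Your version also works, just with extra exponentially small terms to track.

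The genuine issue is in your lower-bound discussion. You correctly write down the ``crude bound'' obtained by dropping $V_{\rm ee}^{\rm SCE}\ge 0$:
\[
E_{\rm SCE}(R)\ \ge\ \frac{1}{2R}+2\,\epsilon(R),
\]
and you correctly note that $\epsilon(R)\to e_0$ is classical. But you then assert this is ``not sufficient'' because $\tfrac{1}{2R}+2\epsilon(R)$ may lie below $2e_0$ for finite $R$. That is a non sequitur: the theorem asks only for the limit. From $E_{\rm SCE}(R)\ge 2\epsilon(R)$ and $\epsilon(R)\to e_0$ one gets $\liminf_{R\to\infty}E_{\rm SCE}(R)\ge 2e_0$ directly; no compensation of constants is required. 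Your sentence ``it gives $\liminf\ge 2\lim\epsilon(R)=2e_0$ only if \ldots'' is simply false---the ``only if'' clause is not needed.

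This is exactly the route the paper takes. It drops $V_{\rm ee}^{\rm SCE}$ and proves the quantitative estimate $\langle\phi|\hat h|\phi\rangle\ge (e_0-O(R^{-1}))\|\phi\|^2$ via an IMS localization: with a smooth partition $\zeta_1^2+\zeta_2^2=1$ adapted to the two half-spaces (gradient bounded by $C/R$), one has $\|\nabla\phi\|^2=\sum_i\|\nabla(\zeta_i\phi)\|^2-\sum_i\||\nabla\zeta_i|\phi\|^2$, and on the support of $\zeta_i\phi$ the ``wrong'' nuclear attraction is at most $2/R$ in absolute value, while the ``right'' one combines with the kinetic piece to give $\ge e_0\|\zeta_i\phi\|^2$. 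This yields $E_{\rm SCE}(R)\ge 2e_0+O(R^{-1})$, hence the limit.

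So the elaborate second paragraph of your lower-bound plan---extracting a $1/R$ contribution from $V_{\rm ee}^{\rm SCE}$ via Kantorovich duality or a geometric transport lemma---is unnecessary for the stated theorem. It would be relevant if you wanted a sharp lower bound of the form $E_{\rm SCE}(R)\ge 2e_0$ (or $2e_0-o(1/R)$) for all large $R$, but that is a stronger statement than \eqref{sce-h2-limit}.
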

\begin{proof}
First, we establish an upper bound of $E_{\rm SCE}(R)$.
Let $\varphi(r)=e^{-r}/\sqrt{\pi}$ and
\begin{equation*}
\psi({\bf r}) = \left(\frac{1}{2} \Big(\varphi^2(|{\bf r}-{\bf R}_A|) + \varphi^2(|{\bf r}-{\bf R}_B|) \Big) \right)^{1/2}.
\end{equation*}
Note that $\varphi$ is the minimizer of \eqref{e_0},
and $\|\varphi\|_{L^2(\mathbb{R}^3)}=1$ implies $\|\psi\|_{L^2(\mathbb{R}^3)}=1$.
We have
\begin{equation}\label{proof-4-b}
E_{\rm SCE}(R) \leq \frac{1}{2R} + \int_{\mathbb{R}^3}|\nabla\psi({\bf r})|^2d{\bf r}
+ 2\int_{\mathbb{R}^3}v_{\rm ext}({\bf r})|\psi({\bf r})|^2d{\bf r} + V_{\rm ee}^{\rm SCE}[2|\psi|^2].
\end{equation}
Let $\phi_1({\bf r})=\varphi(|{\bf r}-{\bf R}_A|)$ and $\phi_2({\bf r})=\varphi(|{\bf r}-{\bf R}_B|)$.
A direct calculation leads to
\begin{equation}\label{proof-4-c}
\begin{split}
\int_{\mathbb{R}^3}|\nabla\psi|^2
&= \int_{\mathbb{R}^3}\frac{|\phi_1\nabla\phi_1+\phi_2\nabla\phi_2|^2}{4(\phi_1^2+\phi_2^2)} \\
&\leq \frac{1}{2} \int_{\mathbb{R}^3}(|\nabla\phi_1|^2+|\nabla\phi_2|^2) = \int_{\mathbb{R}^3}|\nabla\varphi|^2
\end{split}
\end{equation}
and
\begin{equation}\label{proof-4-d}
\begin{split}
& 2\int_{\mathbb{R}^3}v_{\rm ext}({\bf r})|\psi({\bf r})|^2d{\bf r}
= -\int_{\mathbb{R}^3}\frac{\phi_1^2({\bf r})+\phi_2^2({\bf r})}{|{\bf r}-{\bf R}_A|}d{\bf r}
-\int_{\mathbb{R}^3}\frac{\phi_1^2({\bf r})+\phi_2^2({\bf r})}{|{\bf r}-{\bf R}_B|}d{\bf r} \\
&= -2\int_{\mathbb{R}^3}\frac{\varphi^2(|{\bf r}|)}{|{\bf r}|}d{\bf r}
-\int_{\mathbb{R}^3}\frac{\phi_2^2({\bf r})}{|{\bf r}-{\bf R}_A|}d{\bf r}
-\int_{\mathbb{R}^3}\frac{\phi_1^2({\bf r})}{|{\bf r}-{\bf R}_B|}d{\bf r} \\
&\leq -2\int_{\mathbb{R}^3}\frac{\varphi^2(|{\bf r}|)}{|{\bf r}|}d{\bf r}.
\end{split}
\end{equation}
Let $\rho_2({\bf r}_1,{\bf r}_2)=2|\psi({\bf r}_1)|^2\delta({\bf r}_2,{\bf r}_1-{\bf R}_A+{\bf R}_B)$,
we have
\begin{equation}\label{proof-4-e}
V_{\rm ee}^{\rm SCE}[2|\psi|^2] \leq
\int_{\mathbb{R}^6}\frac{\rho_2({\bf r}_1,{\bf r}_2)}{|{\bf r}_1-{\bf r}_2|}d{\bf r}_1d{\bf r}_2
= \int_{\mathbb{R}^3}\frac{2|\psi({\bf r}_1)|^2}{|{\bf R}_A-{\bf R}_B|}d{\bf r}_1 = \frac{1}{R}.
\end{equation}
Taking \eqref{proof-4-b}, \eqref{proof-4-c}, \eqref{proof-4-d} and \eqref{proof-4-e} into account, we have
\begin{equation}\label{proof-4-f}
E_{\rm SCE}(R) \leq 2e_0 + \frac{3}{2R}.
\end{equation}

To give a lower bound of $E_{\rm SCE}(R)$, we observe that
\begin{equation}\label{proof-4-g}
\begin{split}
E_{\rm SCE}(R) &\geq \frac{1}{2R} + \inf_{\|\phi\|_{L^2}=1}
\left\{\int_{\mathbb{R}^3}|\nabla\phi({\bf r})|^2d{\bf r}
+ 2\int_{\mathbb{R}^3}v_{\rm ext}({\bf r})|\phi({\bf r})|^2d{\bf r} \right\} \\
&= \frac{1}{2R} + 2\inf_{\|\phi\|_{L^2}=1}\langle\phi | \hat{h} | \phi\rangle,
\end{split}
\end{equation}
where $\hat{h}$ is the H$_2^+$ Hamiltonian $\hat{h}=-\frac{1}{2}\Delta + v_{\rm ext}$.
%
We claim that
\begin{equation}\label{proof-4-j}
\langle\phi |\hat{h} | \phi\rangle \geq (e_0-O(R^{-1}))\|\phi\|^2_{L^2}
\quad \forall~\phi\in H^1(\mathbb{R}^3).
\end{equation}
To show \eqref{proof-4-j}, we first decompose the unity function on $\mathbb{R}$ into
two smooth cutoff functions $\tilde{\zeta}_1$ and $\tilde{\zeta}_2$,
such that $\tilde{\zeta}_1^2+\tilde{\zeta}_2^2=1$,
$\tilde{\zeta}_1(x)=0$ for $x<-\frac{1}{2}$, $\tilde{\zeta}_2(x)=0$ for $x>\frac{1}{2}$,
and $|\nabla\tilde{\zeta}_i|\leq C^*$ with some constant $C^*$.
Let
$$
\zeta_i({\bf r})=\zeta_i(x,y,z)=\tilde{\zeta}_i(x/R) \quad i=1,2
$$
and $\phi_i=\zeta_i\phi$. We have $|\nabla\zeta_i|\leq C^*/R$ and
\begin{equation*}
\sum_{i=1}^2|\nabla\phi_i|^2 = \sum_{i=1}^2 (|\nabla\zeta_i|^2)|\phi|^2 + |\nabla\phi|^2.
\end{equation*}
Therefore,
\begin{equation*}
\begin{split}
\langle\phi |\hat{h} | \phi\rangle &=
\frac{1}{2}\sum_{i=1}^2\int_{\mathbb{R}^3}|\nabla\phi_i|^2 - \int_{\mathbb{R}^3}\sum_{i=1}^2 (|\nabla\zeta_i|^2) |\phi|^2
+ \int_{\mathbb{R}^3} v_{\rm ext}\big(|\phi_1|^2+|\phi_2|^2\big) \\
&\geq \sum_{i=1}^2\int_{\mathbb{R}^3}\left(\frac{1}{2}|\nabla\phi_i|^2 - \frac{|\phi_i|^2}{|{\bf r}|}\right)
- \left(\frac{2C^*}{R}\right)^2 - \frac{2}{R},
\end{split}
\end{equation*}
which implies \eqref{proof-4-j}.
Therefore, we obtain from \eqref{proof-4-g} and \eqref{proof-4-j} that
\begin{equation}\label{proof-4-h}
E_{\rm SCE}(R) \geq 2e_0 + O(R^{-1}).
\end{equation}
Together with the upper bound \eqref{proof-4-f}, this leads to \eqref{sce-h2-limit} at the limit $R\rightarrow\infty$.
\end{proof}

\begin{figure}[!ht]
\centering
\subfloat{\includegraphics[width=0.5\textwidth]{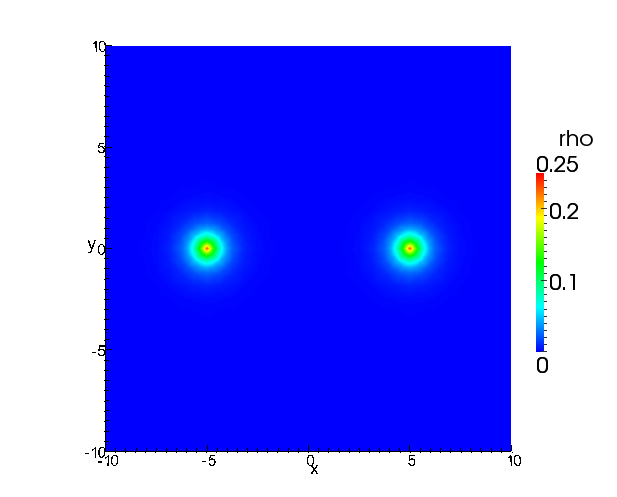}}
\subfloat{\includegraphics[width=0.5\textwidth]{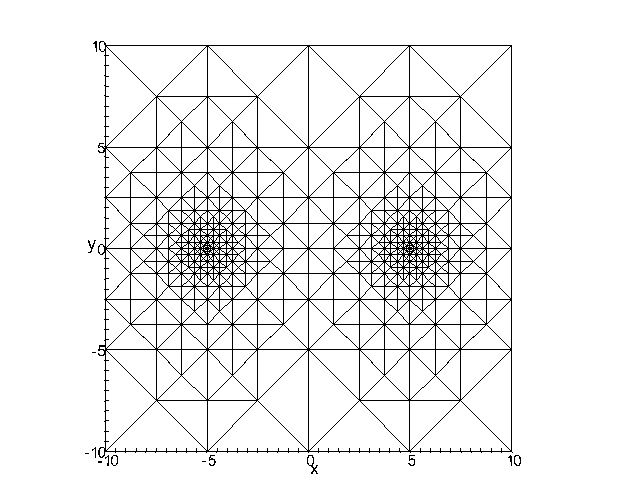}}
\caption{The electron density and corresponding mesh on slice $z=0$ ($R = 5$).}
\label{fig_h2_rho_mesh}
\end{figure}

In what follows, we present a numerical simulation of the dissociating H$_2$ molecule to support the theory.
The computations are carried out on a bounded domain $\Omega=[-10,10]^3$.
We use Algorithm~\ref{algorithm-ks_SCE} to solve \eqref{ks-sce} for the ground state energies
and electron densities for different bond length $2R$.
Concerning the optimal transport problem, we use the numerical methods introduced in
Section \ref{section-numerical} and calculate the co-motion function in each SCF iteration step.
The nonuniform mesh is generated by the package \textsf{PHG} \cite{phg},
a toolbox for parallel adaptive finite element programs developed at the State Key Laboratory of
Scientific and Engineering Computing of the Chinese Academy of Sciences. While the problem is effectively two-dimensional according to Theorem~\ref{theo-symmetric}, we have performed the calculations in three dimensions since \textsf{PHG} is tailored to three-dimensional problems. Figure~\ref{fig_h2_rho_mesh} shows a contour plot of
the electron density at slice $z=0$ and the corresponding mesh.
One observes that the grid reflects the higher density around the nuclei.

\begin{figure}[!ht]
\centering
\includegraphics[height=5.5cm]{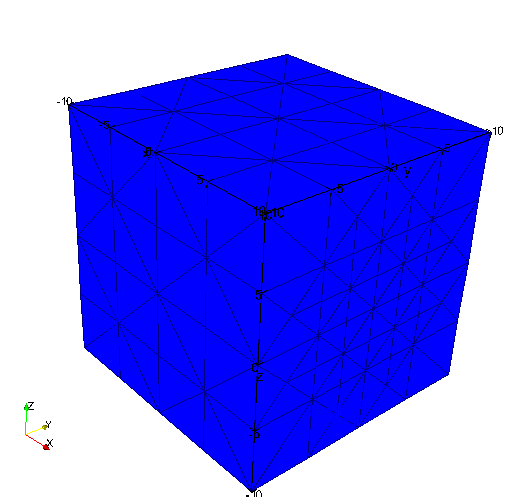}\hskip 3.0cm
\includegraphics[height=5.5cm]{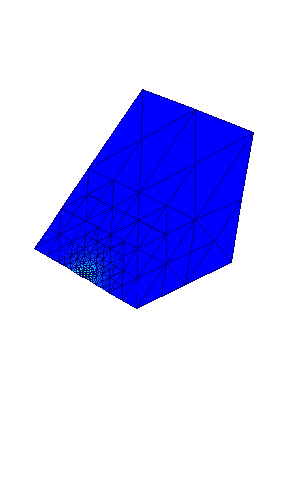}
\put(-140,70){\makebox(1,1){$x\geq 0,y\geq 0,z\geq 0,z\leq y$}}
\put(-120,60){\makebox(1,1){$\Longrightarrow$}}
\caption{Symmetric decomposition of the computation domain $\Omega$ for H$_2$.}
\label{fig_h2_mesh}
\end{figure}

To further reduce the computational cost, we can exploit the cylindrical symmetry of the system with the help of Theorem~\ref{theo-symmetric}. As shown in Figure~\ref{fig_h2_mesh}, the degrees of freedom can be reduced to $1/16$ of the original volume.
For the linear programming problem \eqref{ot_linprog}, we resort to \textsf{MOSEK} \cite{mosek}, a high-performance software for large-scale optimization problems.

The computational results are presented in Figure~\ref{fig_h2_bond}, in which we compare the bond energies in dependence of $R$ using the LDA and SCE Kohn-Sham methods, respectively. Here, the bond energies are the ground state energies of the systems minus $2e_0$, which is expected to be zero when the two hydrogen atoms are disassociated. Note that the SCE model shows the correct asymptotic behavior, while the LDA model fails at large $R$ by giving too large energies. For comparison, the LDA error $0.065\,{\rm a.u.}$ in \cite{ChallengesDFT2012} for the infinitely stretched H$_2$ molecule is lower than in Figure~\ref{fig_h2_bond} since twice the LDA-hydrogen energy is subtracted instead of twice the exact $e_0$, but remains significant; errors of similar magnitude are reported there for other functionals such as B3LYP or PBE.

\begin{figure}[!ht]
\centering
\includegraphics[width=0.7\textwidth]{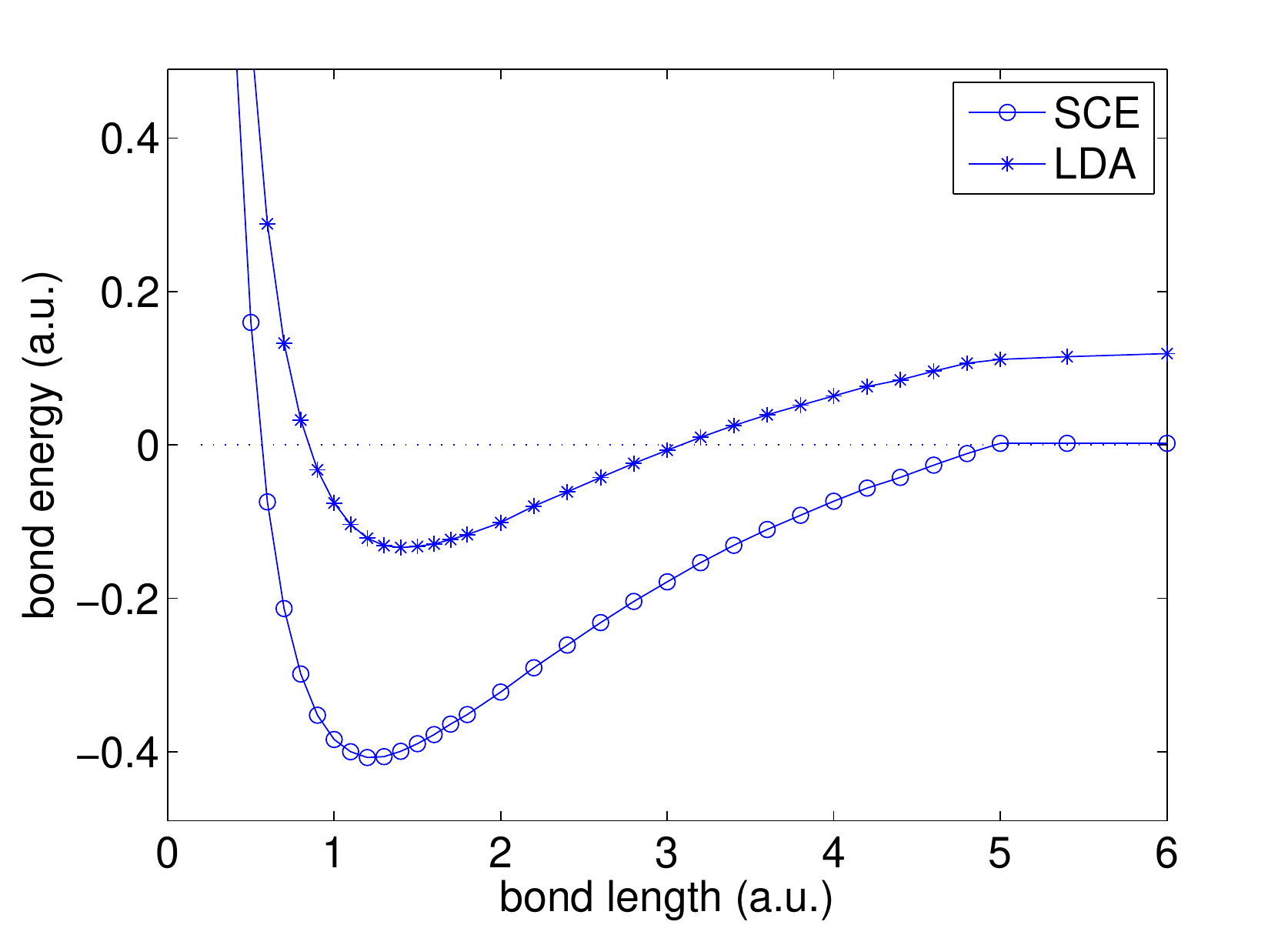}
\caption{H$_2$ potential energy curve as a function of the bond length for the SCE and LDA models.}
\label{fig_h2_bond}
\end{figure}

As physical explanations for these results, at long internuclear separations, if one electron is located near atom A, the other will be found close to atom B. This correlation is correctly reflected by the optimal transport model, hence the SCE model gives asymptotically the product of hydrogen orbitals on the two nuclei. In contrast, within the Kohn-Sham LDA framework, the two electrons are constrained to be in the same spatial orbital and each electron experiences only the average effect of the other, thus each electron has equal probability of being near A or B, irrespective of the position of the other electron. The possibility of both electrons being on the same atom is not excluded, as reflected in the wrong asymptotic behavior of the disassociation energy in Figure~\ref{fig_h2_bond}.

\begin{figure}[!htb]
\centering
\subfloat[]{\includegraphics[width=0.5\textwidth]{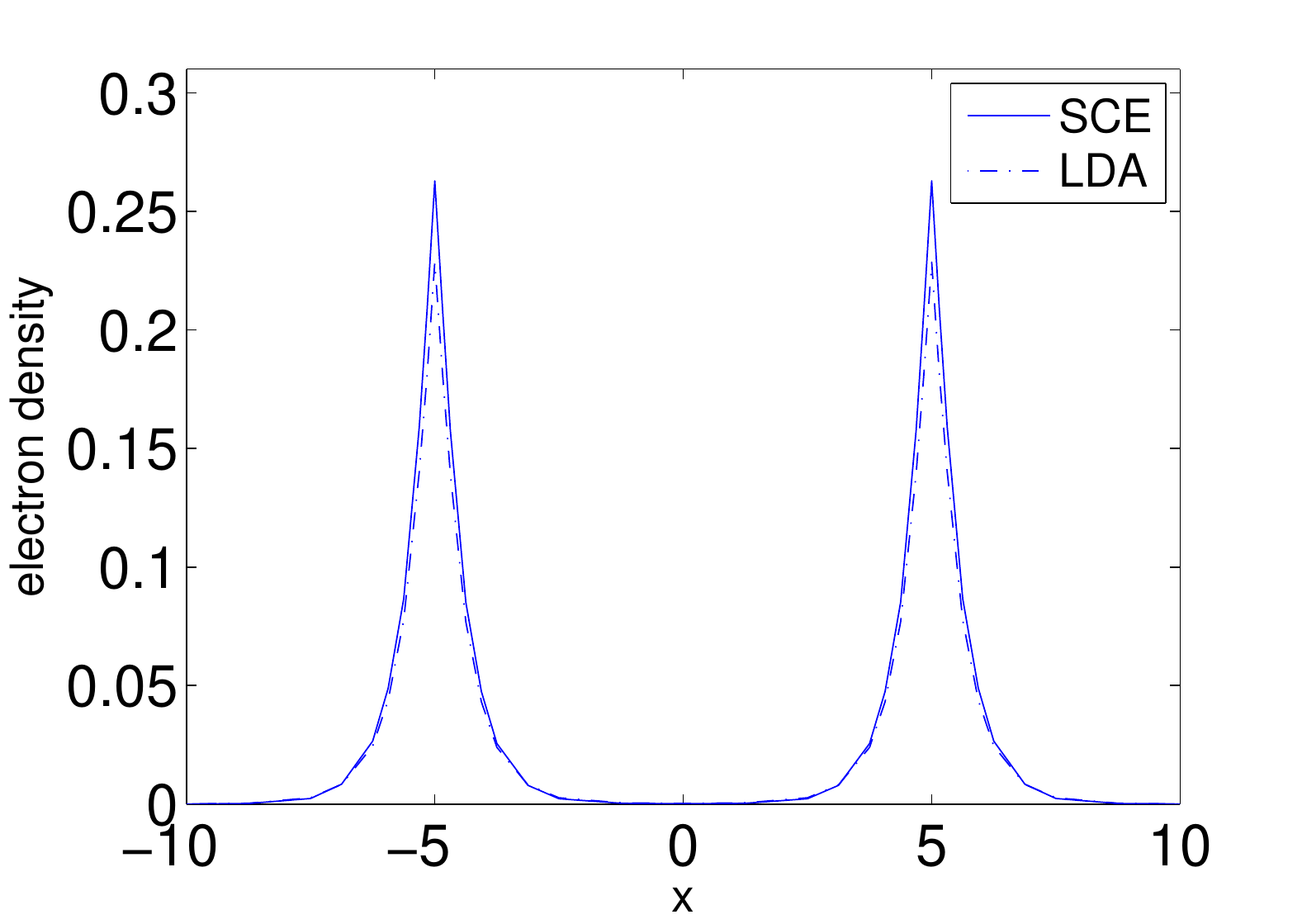}}
\subfloat[]{\includegraphics[width=0.5\textwidth]{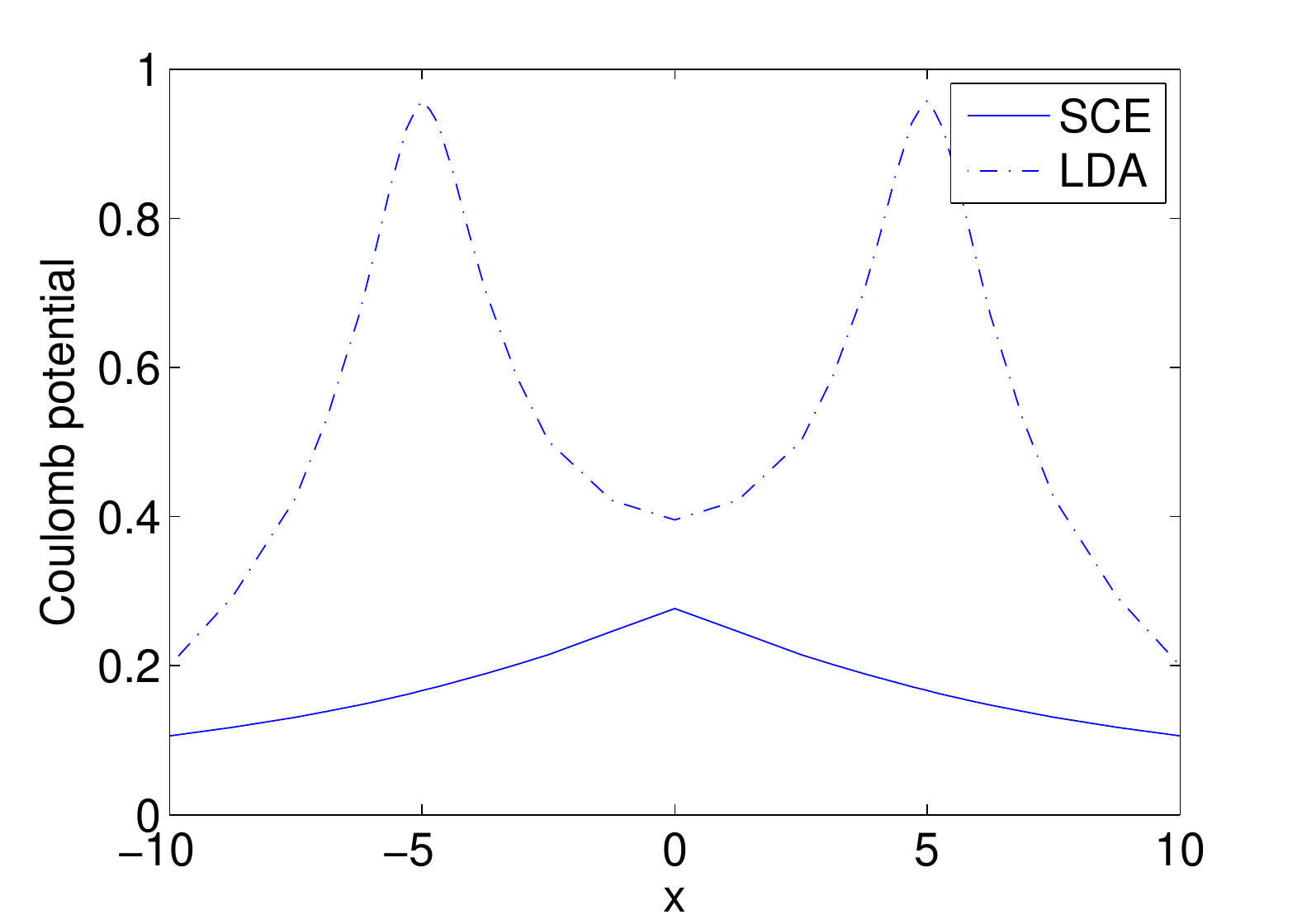}}
\caption{The SCE and LDA electron densities (a) and potentials (b) of H$_2$, plotted along the molecular axis.}
\label{fig_h2_x_rho_vcu}
\end{figure}

We further compare the electron densities and Coulomb potentials obtained by the two different models in Figure~\ref{fig_h2_x_rho_vcu}. Note that the scalar offset of the potential in the figure is determined by the boundary conditions, i.e., $\frac{N}{R}$ for LDA and $\frac{N-1}{R}$ for SCE \cite{seidl07}. The electron densities are actually quite close, while the shape of the potentials differs substantially when $R$ is large. The SCE potential is larger between the hydrogen atoms, favoring a depletion of the bond charge whenever the two atoms separate. This produces the correct results for large $R$.

\begin{figure}[!htb]
\centering
\includegraphics[width=0.4\textwidth]{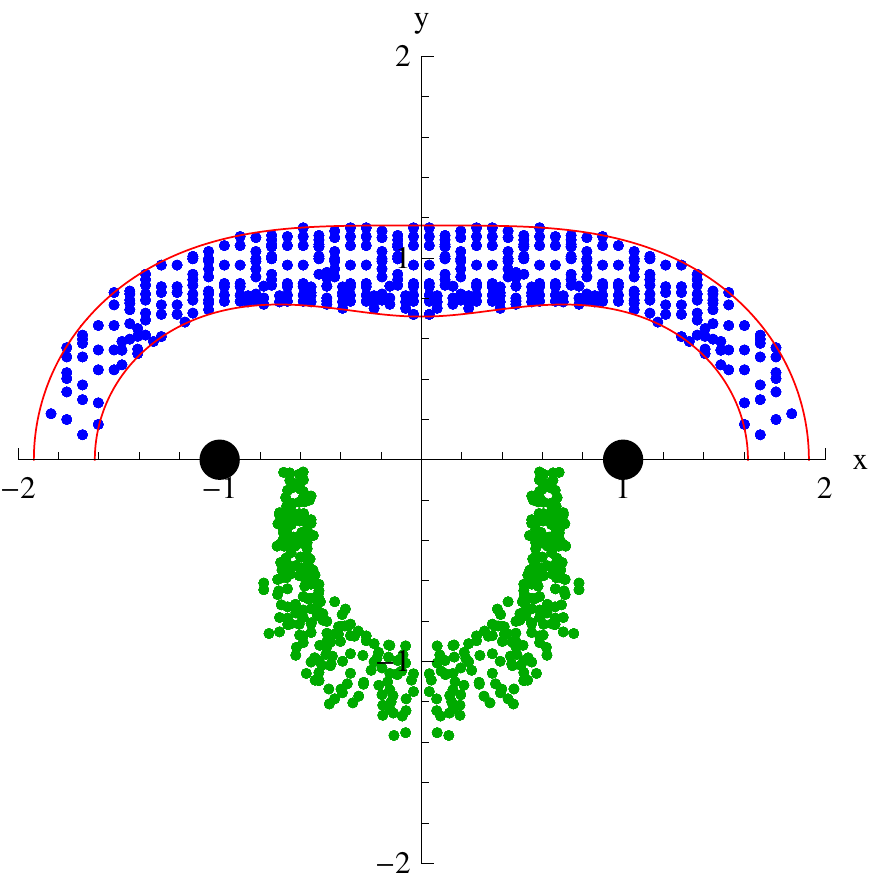}
\caption{Optimal transport mapping of the region $0.04 \le \rho({\bf r}) \le 0.08$ (indicated by the blue dots inside the red contours) to the green area, for the H$_2$ molecule with $R = 1$. Each blue dot corresponds to one barycenter in the numerical discretization, and has been rotated into the $x$-$y$-plane with $y \ge 0$ for visual clarity. The green dots are precisely the images of the blue dots under the optimal transport map.}
\label{fig_h2_ot_map}
\end{figure}

Finally, as illustration of the co-motion function or optimal map, Figure~\ref{fig_h2_ot_map} shows the image of the map on a density contour.

\section{Conclusions and perspectives}
\label{section-perspective}

The numerical discretization of the SCE optimal transport problem with Coulomb cost and two marginals leads to the linear programming problem \eqref{ot_linprog}, which can indeed be solved in practice, as we have demonstrated in a proof of concept calculation. The self-consistent SCE-DFT simulation of the dissociating H$_2$ molecule agrees well with the physically correct limit, unlike standard DFT models like LDA.

The theory of this paper applies to more general systems with arbitrary numbers of electrons,
however, the numerical algorithms need further developments. Specifically, we can restrict the $N$-particle density to the ansatz
\begin{equation*}
\rho_N({\bf r}_1,\dots,{\bf r}_N) = \frac{\rho({\bf r}_1)}{N}
\gamma_2({\bf r}_1,{\bf r}_2)\gamma_3({\bf r}_1,{\bf r}_3)
\cdots\gamma_N({\bf r}_1,\gamma_N),
\end{equation*}
where $\rho$ is the given single-particle density.
Here, $\gamma_j({\bf r}_1,{\bf r}_j)$ represents the probability of the $j$th electron being found at ${\bf r}_j$
while the first electron is located at ${\bf r}_1$.
We have
\begin{equation*}
\int_{\mathbb{R}^3}\gamma_j({\bf r}_1,{\bf r}_j)d{\bf r}_j = 1 \quad\quad j=2,\dots,N.
\end{equation*}
With a given discretization $\{e_k\}_{1\leq k\leq n}$ and barycenters ${\bf a}_k$ of $e_k$,
we can approximate $\gamma_j({\bf r}_1,{\bf r}_j)$ by a matrix $X_j=(x_{j,kl}) \in \mathbb{R}^{n\times n}$ for $j=2,\dots,N$.
Then the continuous model \eqref{energy-sce} is reduced to
\begin{equation}\label{ot_quadprog}
\begin{array}{rl}
\displaystyle \min_{X_2,\dots,X_N} &
\displaystyle \sum_{1 < j \leq N} \sum_{k,l=1}^n \frac{x_{j,kl}}{|{\bf a}_k - {\bf a}_l|}
\cdot \frac{\rho_k}{N}
+ \sum_{1 < i < j\leq N} \sum_{k,l,l' = 1}^n \frac{x_{i,kl} \cdot x_{j,kl'}}{|{\bf a}_l - {\bf a}_{l'}|} 
\cdot \frac{\rho_k}{N} \\ \\
\mathrm{s.t.} & \sum_{k=1}^n x_{i,kl}=1, \quad l = 1, \dots, n, \quad i = 2, \dots, N \\[1ex]
& \sum_{l=1}^n x_{i,kl} = 1, \quad k = 1, \dots, n, \quad i = 2, \dots, N \\[1ex]
& x_{i,kl} \ge 0.
\end{array}
\end{equation}
This is a quadratic programming problem of the form
\begin{align*}
&\min_x \, x^T H x + f^T x \\
&\mathrm{s.t.}\quad A\,x = b \quad\mathrm{and}\quad x_k \ge 0.
\end{align*}
By solving the above quadratic programming problem, we approximate the co-motion functions
$T_2,\dots,T_N$ by the matrices $X_2=(x_{2,kl}), \dots, X_N = (x_{N,kl})$.
Similar to \eqref{T_approximate}, the co-motion functions can be approximated by
$$
T_i({\bf a}_k) \approx \sum_{1\leq l\leq n} {\bf a}_l \, x_{i,kl}
\qquad k=1,\dots,n, \quad i=2,\dots,N.
$$
However, a serious difficulty in solving \eqref{ot_quadprog} stems from the non-convexity of the matrix $H$. Moreover, the symmetric decomposition is not clear for systems with more than two electrons. We plan to investigate these issues in future work.

\medskip

{\bf Acknowledgments}\quad C.M.\ acknowledges support from the DFG project FR~1275/3-1.

\appendix

\section{Appendix}

\begin{lemma}\label{lemma-outerSGS}
If $v_0$ is a maximizer of \eqref{SGS} with single-particle density $\rho$,
and $\rho^0_N$ is a minimizer of the inner optimization of \eqref{SGS}, i.e. the minimizer of
\begin{equation}\label{outerSGS}
\min_{\rho_N}\int_{\mathbb{R}^{3N}}\left(c_{\rm ee}({\bf r}_1,\dots,{\bf r}_N)
-\sum_{i=1}^N v_0({\bf r}_i)\right)\rho_N({\bf r}_1,\dots,{\bf r}_N)
d{\bf r}_1\dots d{\bf r}_N,
\end{equation}
then $\rho_N^0$ is exactly the minimizer of the constraint minimization problem \eqref{energy-sce}.
\end{lemma}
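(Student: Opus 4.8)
The plan is to compute the optimal value of the inner problem \eqref{outerSGS} explicitly in terms of $V_{\rm ee}^{\rm SCE}[\rho]$, and then to read off from the equality case exactly which measures $\rho_N$ attain it. Throughout, for a symmetric $N$-point probability measure $\rho_N$ I write $\rho_N\mapsto\sigma$ for its single-particle marginal, noting that by symmetry $\int_{\mathbb{R}^{3N}}\sum_{i=1}^N v_0({\bf r}_i)\,\rho_N=\int_{\mathbb{R}^3}v_0\,\sigma$.

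First I would establish, for \emph{every} such $\rho_N$ with $\rho_N\mapsto\sigma$, the chain
\begin{equation*}
\begin{split}
\int_{\mathbb{R}^{3N}}\Big(c_{\rm ee}-\sum_{i=1}^N v_0({\bf r}_i)\Big)\rho_N
&= V_{\rm ee}[\rho_N]-\int_{\mathbb{R}^3}v_0\,\sigma \\
&\ge V_{\rm ee}^{\rm SCE}[\sigma]-\int_{\mathbb{R}^3}v_0\,\sigma
\;\ge\; \inf_{\sigma'}\Big(V_{\rm ee}^{\rm SCE}[\sigma']-\int_{\mathbb{R}^3}v_0\,\sigma'\Big),
\end{split}
\end{equation*}
where the first inequality is the definition \eqref{energy-sce} of $V_{\rm ee}^{\rm SCE}$ as an infimum over measures with prescribed marginal, and the last infimum equals the value of \eqref{outerSGS}, i.e. $-V_{\rm ee}^{\rm SCE*}[v_0]$, by the calculation \eqref{proof-1-a}. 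Both ``$\ge$'' are equalities if and only if (a) $V_{\rm ee}[\rho_N]=V_{\rm ee}^{\rm SCE}[\sigma]$, i.e. $\rho_N$ minimizes \eqref{energy-sce} for its \emph{own} marginal $\sigma$, and (b) $\sigma$ minimizes $\sigma'\mapsto V_{\rm ee}^{\rm SCE}[\sigma']-\int v_0\,\sigma'$.

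Next I would feed in the optimality of $v_0$ in \eqref{SGS}. Since $V_{\rm ee}^{\rm SCE}$ is convex (see the proof of Theorem~\ref{theo-potentials}) it equals its biconjugate, so \eqref{SGS} says $V_{\rm ee}^{\rm SCE}[\rho]=\max_v\big(\int v\rho-V_{\rm ee}^{\rm SCE*}[v]\big)$; evaluating at the maximizer $v_0$ gives $-V_{\rm ee}^{\rm SCE*}[v_0]=V_{\rm ee}^{\rm SCE}[\rho]-\int v_0\rho$, which says precisely that $\rho$ itself attains the infimum $\inf_{\sigma'}(V_{\rm ee}^{\rm SCE}[\sigma']-\int v_0\sigma')$ (the Fenchel--Young equality $v_0\in\partial V_{\rm ee}^{\rm SCE}[\rho]$). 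Two consequences: (i) the minimizer $\rho_N^\star$ of \eqref{energy-sce} for the given density $\rho$ satisfies $\int(c_{\rm ee}-\sum_i v_0({\bf r}_i))\rho_N^\star=V_{\rm ee}^{\rm SCE}[\rho]-\int v_0\rho=-V_{\rm ee}^{\rm SCE*}[v_0]$, hence is a minimizer of \eqref{outerSGS} --- this is in fact all that the proof of Theorem~\ref{theo-potentials} needs, since \eqref{proof-1-c} then applies to $\rho_N^\star$; and (ii) conversely, any minimizer $\rho_N^0$ of \eqref{outerSGS} attains the value $-V_{\rm ee}^{\rm SCE*}[v_0]$, i.e. realizes equality throughout the chain above, so by (a) it minimizes \eqref{energy-sce} for its marginal $\sigma^0$ and by (b) that $\sigma^0$ again minimizes $\sigma'\mapsto V_{\rm ee}^{\rm SCE}[\sigma']-\int v_0\sigma'$.

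It then remains to identify $\sigma^0=\rho$, which gives the statement. Both $\rho$ and $\sigma^0$ minimize the same convex functional $\sigma'\mapsto V_{\rm ee}^{\rm SCE}[\sigma']-\int v_0\sigma'$, equivalently $v_0$ (up to an additive constant) lies in $\partial V_{\rm ee}^{\rm SCE}[\rho]\cap\partial V_{\rm ee}^{\rm SCE}[\sigma^0]$; using Theorem~\ref{theo-derivative}, which identifies this subdifferential with the Kantorovich potential up to a constant, this forces $u_{\sigma^0}=u_\rho$ up to a constant, whence $\sigma^0=\rho$. This last step --- excluding two distinct densities that induce the same effective potential --- is the main obstacle, and is exactly the $v$-representability subtlety the text flags: it is automatic once one assumes strict convexity of $V_{\rm ee}^{\rm SCE}$ (or injectivity of $\rho\mapsto u_\rho$), and is in any case immaterial for the rest of the paper, where only consequence (i) is used and where, for $N=2$, the minimizer of \eqref{energy-sce} for a prescribed density is unique.
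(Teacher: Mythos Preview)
Your approach via Legendre--Fenchel duality is sound and genuinely different from the paper's. The paper argues by first-order perturbation: since $v_0$ maximizes \eqref{SGS}, the derivative with respect to perturbations $v_0\mapsto v_0+\varepsilon\tilde v$ must vanish, and an envelope/Hellmann--Feynman computation of this derivative for the inner minimum yields directly that the marginal of $\rho_N^0$ equals $\rho$; once the marginal is fixed, the term $\int\sum_i v_0({\bf r}_i)\,\rho_N$ is constant and $\rho_N^0$ must minimize $V_{\rm ee}[\rho_N]$ under $\rho_N\mapsto\rho$. You instead sandwich the inner objective between $V_{\rm ee}^{\rm SCE}[\sigma]-\int v_0\sigma$ and $-V_{\rm ee}^{\rm SCE*}[v_0]$, and use the Fenchel--Young equality at the maximizer $v_0$ to identify the latter with $V_{\rm ee}^{\rm SCE}[\rho]-\int v_0\rho$.

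Both routes hit the same obstruction, which you name honestly and the paper leaves implicit: the perturbation formula \eqref{proof-5-b} presumes differentiability of the inner optimal value in $\varepsilon$, which fails exactly when several inner minimizers have distinct marginals---the same non-uniqueness that blocks your conclusion $\sigma^0=\rho$ from the shared subdifferential. Your duality argument has the merit of isolating this issue cleanly and, more to the point, of establishing direction~(i) (every constrained minimizer $\rho_N^\star$ of \eqref{energy-sce} is also an unconstrained minimizer of \eqref{outerSGS}) without any such assumption. As you observe, that is all the proof of Theorem~\ref{theo-potentials} actually uses, since \eqref{proof-1-c} then applies to $\rho_N^\star$ directly.
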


\begin{proof}
Since $v_0$ is a maximizer of \eqref{SGS}, we have
\begin{equation}\label{proof-5-a}
0=\left.\frac{d}{d\varepsilon}\right|_{\varepsilon=0}
\left( \int_{\mathbb{R}^3}(v_0+\varepsilon\tilde{v})\rho
+ \min_{\rho_N}\int_{\mathbb{R}^{3N}}\big(c_{\rm ee}-V_0-\varepsilon\tilde{V}\big)\rho_N \right)
\end{equation}
for any $\tilde{v}$,
where $V_0({\bf r}_1,\dots,{\bf r}_N)=\sum_{i=1}^N v_0({\bf r}_i)$ and
$\tilde{V}({\bf r}_1,\dots,{\bf r}_N)=\sum_{i=1}^N \tilde{v}({\bf r}_i)$.

We view the minimization over $\rho_N$ in \eqref{proof-5-a} as a quadratic variational problem
for (spinless bosonic) normalized wavefunctions $\Phi$ (by identifying $|\Phi|^2=\rho_N$):
\begin{equation*}
\min_{\rho_N}\int_{\mathbb{R}^{3N}}(c_{\rm ee}-V_0-\varepsilon\tilde{V})\rho_N
= \min_{\Phi}\langle \Phi | c_{\rm ee}-V_0-\varepsilon\tilde{V} | \Phi \rangle.
\end{equation*}
Using first order perturbation theory and the fact that
$\rho_N^0$ is the minimizer of $\min_{\rho_N}\int_{\mathbb{R}^{3N}}(c_{\rm ee}-V_0)\rho_N$, we obtain
\begin{equation}\label{proof-5-b}
\min_{\rho_N}\int_{\mathbb{R}^{3N}}\left(c_{\rm ee}-V_0-\varepsilon\tilde{V}\right)\rho_N
=\int_{\mathbb{R}^{3N}}(c_{\rm ee}-V_0)\rho_N^0 + \varepsilon\int_{\mathbb{R}^{3N}}\tilde{V}\rho_N^0
+ O(\varepsilon^2).~
\end{equation}
Substituting \eqref{proof-5-b} into \eqref{proof-5-a} yields
\begin{equation*}
\begin{split}
0 &= \int_{\mathbb{R}^{3N}}\tilde{V}\rho_N^0-\int_{\mathbb{R}^3}\tilde{v}\rho \\
&= -\int_{\mathbb{R}^3}\tilde{v}({\bf r})\left(\rho({\bf r})-N\int_{\mathbb{R}^{3(N-1)}}
\rho_N^0({\bf r},{\bf r}_2,\dots,{\bf r}_N)d{\bf r}_2\cdots d{\bf r}_N\right)d{\bf r},
\end{split}
\end{equation*}
which indicates $\rho_N^0\mapsto\rho$.
That is, if $v_0$ is a maximizer of the outer optimization of \eqref{SGS},
then the minimizer of the associated inner optimization in \eqref{SGS}
automatically has the single-particle density $\rho$.

Moreover, since the term $\int_{\mathbb{R}^{3N}}V_0\rho_N$ only depends on the single-particle density
of $\rho_N$, $\rho_N^0$ must minimize $\int_{\mathbb{R}^{3N}}c_{\rm ee}\rho_N$ under the constraint
$\rho_N\mapsto\rho$ (because any other minimizer of \eqref{energy-sce} gives the same value for
$\int_{\mathbb{R}^{3N}}(c_{\rm ee}-V)\rho_N$ as $\rho_N^0$).
Therefore, $\rho_N^0$ is also a minimizer of \eqref{energy-sce}, which completes the proof.
\end{proof}

{\small

\begin{thebibliography}{10}

\bibitem{hohenberg64}
P.~Hohenberg and W.~Kohn.
\newblock {Inhomogeneous electron gas}.
\newblock {\em Phys. Rev. B}, 136:864--871, 1964.

\bibitem{kohn65}
W.~Kohn and L.~J. Sham.
\newblock {Self-consistent equations including exchange and correlation
  effects}.
\newblock {\em Phys. Rev.}, 140:A1133--A1138, 1965.

\bibitem{perdew81}
J.~P. Perdew and A.~Zunger.
\newblock {Self-interaction correction to density-functional approximations for
  many-electron systems}.
\newblock {\em Phys. Rev. B}, 23:5043--5079, 1981.

\bibitem{langreth80}
D.~C. Langreth and J.~P. Perdew.
\newblock {Theory of nonuniform electronic systems. I. Analysis of the gradient
  approximation and a generalization that works}.
\newblock {\em Phys. Rev. B}, 21:5469--5493, 1980.

\bibitem{perdew86}
J.~P. Perdew and Y.~Wang.
\newblock {Accurate and simple density functional for the electronic exchange
  energy: Generalized gradient approximation}.
\newblock {\em Phys. Rev. B}, 33:8800--8802, 1986.

\bibitem{perdew96}
J.~P. Perdew, K.~Burke, and M.~Ernzerhof.
\newblock {Generalized gradient approximation made simple}.
\newblock {\em Phys. Rev. Lett.}, 77:3865--3868, 1996.

\bibitem{becke93}
A.~D. Becke.
\newblock {Density-functional thermochemistry. III. The role of exact
  exchange}.
\newblock {\em J. Chem. Phys.}, 98:5648--5652, 1993.

\bibitem{lee88}
C.~Lee, W.~Yang, and R.~G. Parr.
\newblock {Development of the Colic-Salvetti correlation-energy formula into a
  functional of the electron density}.
\newblock {\em Phys. Rev. B}, 37:785--789, 1988.

\bibitem{stephens94}
P.~J. Stephens, F.~J. Devlin, C.~F. Chabalowski, and M.~J. Frisch.
\newblock {Ab initio calculation of vibrational absorption and circular
  dichroism spectra using density functional force fields}.
\newblock {\em J. Phys. Chem.}, 98:11623--11627, 1994.

\bibitem{helgaker00}
T.~Helgaker, P.~Jorgensen, and J.~Olsen.
\newblock {\em {Molecular electronic-structure theory}}.
\newblock Wiley, 2000.

\bibitem{fuchs05}
M.~Fuchs, Y.~M. Niquet, X.~Gonze, and K.~Burke.
\newblock {Describing static correlation in bond dissociation by Kohn-Sham
  density functional theory}.
\newblock {\em J. Chem. Phys.}, 122:094116, 2005.

\bibitem{gruning03}
M.~Gr\"{u}ning, O.~V. Gritsenko, and E.~J. Baerends.
\newblock {Exchange-correlation energy and potential as approximate functionals
  of occupied and virtual Kohn-Sham orbitals: Application to dissociating
  H$_2$}.
\newblock {\em J. Chem. Phys.}, 118:7183--7192, 2003.

\bibitem{gori09}
P.~Gori-Giorgi, M.~Seidl, and G.~Vignale.
\newblock {Density-functional theory for strongly interacting electrons}.
\newblock {\em Phys. Rev. Lett.}, 103:166402, 2009.

\bibitem{seidl07}
M.~Seidl, P.~Gori-Giorgi, and A.~Savin.
\newblock {Strictly correlated electrons in density-functional theory: A
  general formulation with applications to spherical densities}.
\newblock {\em Phys. Rev. A}, 75:042511, 2007.

\bibitem{gori10}
P.~Gori-Giorgi and M.~Seidl.
\newblock {Density functional theory for strongly-interacting electrons:
  perspectives for physics and chemistry}.
\newblock {\em Phys. Chem. Chem. Phys.}, 12:14405--14419, 2010.

\bibitem{malet12}
F.~Malet and P.~Gori-Giorgi.
\newblock {Strong correlation in Kohn-Sham density functional theory}.
\newblock {\em Phys. Rev. Lett.}, 109:246402, 2012.

\bibitem{monge81}
G.~Monge.
\newblock {M\'{e}moire sur la Th\'{e}orie des D\'{e}blais et des Remblais}.
\newblock {\em Histoire Acad. Sciences, Paris}, 1781.

\bibitem{kantorovich1940}
L.~V. Kantorovich.
\newblock {On an effective method of solving certain classes of extremal
  problems}.
\newblock {\em Dokl. Akad. Nauk. USSR}, 28:212--215, 1940.

\bibitem{kanto42}
L.~V. Kantorovich.
\newblock {On the translocation of masses}.
\newblock {\em Dokl. Akad. Nauk. USSR}, 37:199--201, 1942.

\bibitem{villani09}
C.~Villani.
\newblock {\em {Optimal transport: Old and new}}.
\newblock Springer, Heidelberg, 2009.

\bibitem{cotar13a}
C.~Cotar, G.~Friesecke, and C.~Kl\"{u}ppelberg.
\newblock {Density functional theory and optimal transportation with Coulomb
  cost}.
\newblock {\em Comm. Pure Appl. Math.}, 66:548--599, 2013.

\bibitem{buttazzo12}
G.~Buttazzo, L.~D. Pascale, and P.~Gori-Giorgi.
\newblock {Optimal-transport formulation of electronic density-functional
  theory}.
\newblock {\em Phys. Rev. A}, 85:062502, 2012.

\bibitem{cotar13b}
C.~Cotar, G.~Friesecke, and B.~Pass.
\newblock {Infinite-body optimal transport with Coulomb cost}.
\newblock {\em arXiv:1307.6540}, 2013.

\bibitem{friesecke13}
G.~Friesecke, C.~B. Mendl, B.~Pass, C.~Cotar, and C.~Kl\"{u}ppelberg.
\newblock {$N$-density representability and the optimal transport limit of the
  Hohenberg-Kohn functional}.
\newblock {\em J. Chem. Phys.}, 139:164109, 2013.

\bibitem{mendl13}
C.~B. Mendl and L.~Lin.
\newblock {Towards the Kantorovich dual solution for strictly correlated
  electrons in atoms and molecules}.
\newblock {\em Phys. Rev. B}, 87:125106, 2013.

\bibitem{vuckovic2014}
S.~Vuckovic, L.~O. Wagner, A.~Mirtschink, and P.~Gori-Giorgi.
\newblock In preparation.

\bibitem{malet14}
F.~Malet, A.~Mirtschink, K.~J.~H. Giesbertz, L.~O. Wagner, and P.~Gori-Giorgi.
\newblock {Exchange-correlation functionals from the strongly-interacting limit
  of DFT: Applications to model chemical systems}.
\newblock {\em arXiv:1401.7822}, 2014.

\bibitem{lieb83}
E.~H. Lieb.
\newblock {Density functionals for Coulomb systems}.
\newblock {\em International Journal of Quantum Chemistry}, 24:243--277, 1983.

\bibitem{seidl1999}
M.~Seidl.
\newblock {Strong-interaction limit of density-functional theory}.
\newblock {\em Phys. Rev. A}, 60:4387--4395, 1999.

\bibitem{SPL1999}
M.~Seidl, J.~P. Perdew, and Mel Levy.
\newblock {Strictly correlated electrons in density-functional theory}.
\newblock {\em Phys. Rev. A}, 59:51--54, 1999.

\bibitem{GangboMcCann1996}
W.~Gangbo and R.~J. McCann.
\newblock {The geometry of optimal transportation}.
\newblock {\em Acta Math.}, 177:113--161, 1996.

\bibitem{hoffmann01}
M.~Hoffmann-Ostenhof, T.~Hoffmann-Ostenhof, and T.~{\O}stergaard~S{\o}ensen.
\newblock {Electron wavefunctions and densities for atoms}.
\newblock {\em Ann. Henri Poincar\'e}, 2:77--100, 2001.

\bibitem{phg}
{PHG, http://lsec.cc.ac.cn/phg}.

\bibitem{mosek}
{MOSEK, http://www.mosek.com}.

\bibitem{ChallengesDFT2012}
A.~J. Cohen, P.~Mori-S\'anchez, and W.~Yang.
\newblock {Challenges for density functional theory}.
\newblock {\em Chem. Rev.}, 112:289--320, 2012.

\end{thebibliography}

}

\end{document}